\DeclareMathOperator{\erf}{erf}
\DeclareMathOperator{\poly}{poly}
\DeclareMathOperator{\diag}{diag}
\author{
Divesh Aggarwal\thanks{Courant Institute of Mathematical Sciences, New York
 University. Email: {\tt  divesha@cs.nyu.edu}.}
\and
Oded Regev\thanks{Courant Institute of Mathematical Sciences, New York University. This material is based upon work supported by the National Science Foundation under Grant No.~CCF-1320188. Any opinions, findings, and conclusions or recommendations expressed in this material are those of the authors and do not necessarily reflect the views of the National Science Foundation.}
}
\title{A Note on Discrete Gaussian Combinations of Lattice Vectors}
\begin{document}

\numberwithin{equation}{section}

\newtheorem{theorem}[equation]{Theorem}
\newtheorem{lemma}[equation]{Lemma}
\newtheorem{cor}[equation]{Corollary}
\newtheorem{prop}[equation]{Proposition}
\newtheorem{definition}[equation]{Definition}
\newtheorem{question}[equation]{Question}
\newtheorem{conj}[equation]{Conjecture}
\newtheorem{fact}[equation]{Fact}
 \newtheorem{observation}[equation]{Observation}
 \newtheorem{cla}[equation]{Claim}

\def\cA{{\mathcal A}}
\def\cB{{\mathcal B}}
\def\cC{{\mathcal C}}
\def\cD{{\mathcal D}}
\def\cE{{\mathcal E}}
\def\cF{{\mathcal F}}
\def\cG{{\mathcal G}}
\def\cH{{\mathcal H}}
\def\cI{{\mathcal I}}
\def\cJ{{\mathcal J}}
\def\cK{{\mathcal K}}
\def\cL{{\mathcal L}}
\def\cM{{\mathcal M}}
\def\cN{{\mathcal N}}
\def\cO{{\mathcal O}}
\def\cP{{\mathcal P}}
\def\cQ{{\mathcal Q}}
\def\cR{{\mathcal R}}
\def\cS{{\mathcal S}}
\def\cT{{\mathcal T}}
\def\cU{{\mathcal U}}
\def\cV{{\mathcal V}}
\def\cW{{\mathcal W}}
\def\cX{{\mathcal X}}
\def\cY{{\mathcal Y}}
\def\cZ{{\mathcal Z}}

\def\({\left(}
\def\){\right)}

\def \N{{\mathbb{N}}}
\def \K{{\mathbb{K}}}
\def \Z{{\mathbb{Z}}}
\def \Q{{\mathbb{Q}}}
\def \R{{\mathbb{R}}}
\def \C{{\mathbb{C}}}

\def \bx{{\mathbf{x}}}
\def \bc{{\mathbf{c}}}
\def \bu{{\mathbf{u}}}
\def \be{{\mathbf{e}}}
\def \bv{{\mathbf{v}}}
\def \by{{\mathbf{y}}}
\def \bb{{\mathbf{b}}}
\def \bz{{\mathbf{z}}}
\def \bw{{\mathbf{w}}}
\def \br{{\mathbf{r}}}
\def \bB{{\mathbf{B}}}

\def \bv{{\mathbf{v}}}

\def \cc{R}
\def \cl{r}
\def \dc{S}
\def \dl{s}

\def \eps{\varepsilon}

\maketitle

\begin{abstract}
We analyze the distribution of $\sum_{i=1}^m v_i \bx_i$ where $\bx_1,\ldots,\bx_m$ are fixed vectors from some lattice $\cL \subset \R^n$ (say $\Z^n$) and $v_1,\ldots,v_m$ are chosen independently from a discrete Gaussian distribution over $\Z$. We show that under a natural constraint on $\bx_1,\ldots,\bx_m$, if the $v_i$ are chosen from a wide enough Gaussian, the sum is statistically close to a discrete Gaussian over $\cL$. We also analyze the case of $\bx_1,\ldots,\bx_m$ that are themselves chosen from a discrete Gaussian distribution (and fixed).

Our results simplify and qualitatively improve upon a recent result by Agrawal, Gentry, Halevi, and Sahai~\cite{AGHS13}.
\end{abstract}

\newpage
\section{Introduction}
\label{sec:intro}

The study of discrete Gaussian distributions has been a crucial step in most of the work in lattice-based cryptography in the last few years. A discrete Gaussian distribution over some fixed lattice $\cL$, denoted as $\cD_{\cL, s}$ for some parameter $s>0$, is a distribution in which each lattice point is sampled with probability proportional to the probability density function of a continuous Gaussian distribution of width $s$ evaluated at that point.  There have been several results, e.g.~\cite{MR07,GPV08,Pei10,BF11,MP13}, showing that these functions have many useful properties similar to the corresponding continuous functions. However, some aspects of the discrete Gaussian distribution are not as well understood as those of its continuous counterpart.

\paragraph{Our Result.}

We analyze the following distribution.  Let $\bx_1, \ldots, \bx_m$ be $m$ fixed points in $\Z^n$ for some $m > n$. Let $X$ be the $n \times m$ matrix formed by the vectors $\bx_1, \ldots, \bx_m$ as the column vectors. For $r>0$, define the distribution
\[
\cE_{X, r} := \{ X \cdot \bv \: : \: \bv \leftarrow \cD_{\Z^m, r}\} \;.
\]
We show in Theorem~\ref{thm:sufficient} in Section~\ref{sec:sufficient} that if $X$ satisfies a certain constraint and if $r$ is large enough, then the distribution $\cE_{X,r}$ is statistically close to the discrete Gaussian distribution with appropriate covariance (denoted as $\cD_{\Z^n, rX^T}$). The constraint on $X$ that we need is that $X$ has small entries and additionally there are short vectors $\bu_1, \ldots, \bu_n \in \Z^m$ such that $X \cdot \bu_i = \be_i$ for $1\leq i \leq n$, where $\be_i$ are unit vectors in $n$ dimensions. 

We then show that if the vectors $\bx_1, \ldots, \bx_m$ are chosen according to the discrete Gaussian distribution, then $X$ satisfies the aforementioned constraint, and hence the following result holds:

\begin{theorem}[Informal]
\label{thm:main-informal}
For integers $n \ge 1$, $m = \poly(n)$, $s =  \Omega(\sqrt{\log (n/\eps)})$,  and $r = \tilde{\Omega}(n s \sqrt{\log{(1/\eps)}}) $, and $X \leftarrow (\cD_{\Z^n, s})^m$, we have with probability $1 - 2^{-n}$, the statistical distance between $\cE_{X,r}$ and $\cD_{\Z^n, r X^{T}}$ is at most $\eps$.
\end{theorem}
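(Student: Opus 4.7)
The overall strategy is to derive Theorem~\ref{thm:main-informal} as an immediate corollary of Theorem~\ref{thm:sufficient}: the only thing to do is verify, with failure probability at most $2^{-n}$ over $X \leftarrow (\cD_{\Z^n,s})^m$, that $X$ satisfies the two structural hypotheses of Theorem~\ref{thm:sufficient}, namely that $X$ has small entries and that there exist short integer vectors $\bu_1,\ldots,\bu_n \in \Z^m$ with $X \bu_i = \be_i$. Once both are in hand, substituting the resulting quantitative bounds into Theorem~\ref{thm:sufficient} with $r = \tilde\Omega(ns\sqrt{\log(1/\eps)})$ will yield the claimed statistical distance bound.

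The small-entries condition is routine. Each column $\bx_j$ of $X$ is an independent sample from $\cD_{\Z^n,s}$ with $s = \Omega(\sqrt{\log(n/\eps)})$, which is above the smoothing parameter of $\Z^n$, so a standard (Banaszczyk-type) tail bound gives $\|\bx_j\| \le O(s\sqrt n)$ with failure probability $2^{-\Omega(n)}$ per column; a union bound over the $m = \poly(n)$ columns absorbs comfortably into the overall $2^{-n}$ budget.

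The substantive part is producing the short integer preimages $\bu_i$, and this is where I expect the main obstacle to lie. I would proceed in three steps. \emph{First}, I would show $X\Z^m = \Z^n$: because $s$ is above the smoothing parameter of $\Z^n$, each column is near-uniform modulo any sublattice of small index, so a union bound over the ``bad'' prime-index sublattices (the nontrivial invariants that could appear in the Smith normal form of $X$) shows that surjectivity fails with probability at most $2^{-\Omega(n)}$. \emph{Second}, I would bound the smoothing parameter $\eta_\eps(\Lambda^\perp(X))$ of the kernel lattice $\Lambda^\perp(X) := \{\bu \in \Z^m : X \bu = 0\}$, aiming for $\eta_\eps(\Lambda^\perp(X)) \le \tilde O(1)$ once $m \gg n$; this is the delicate step, because unlike the well-studied setting of uniform-mod-$q$ matrices, the randomness here comes from discrete-Gaussian columns, and the cleanest route is probably a dual-lattice argument showing that every short nonzero vector of the dual of $\Lambda^\perp(X)$ has exponentially small probability once $s \ge \eta_\eps(\Z^n)$. \emph{Third}, fixing any initial integer preimage $\tilde\bu_i$ of $\be_i$ (which exists by the first step), I would sample $\bu_i \leftarrow \tilde\bu_i + \cD_{\Lambda^\perp(X), r'}$ with $r' = \tilde O(1)$ slightly above the smoothing parameter, and invoke Gaussian concentration to obtain $\|\bu_i\| \le \tilde O(\sqrt m)$. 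Together with the norm bound on the columns of $X$, this supplies the required short preimages and closes the verification of the hypothesis of Theorem~\ref{thm:sufficient}, after which the stated $r = \tilde\Omega(n s \sqrt{\log(1/\eps)})$ regime follows by matching the constants produced in steps two and three.
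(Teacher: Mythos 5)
Your high-level frame — reduce to Theorem~\ref{thm:sufficient} and then verify, over the random choice of $X$, that the columns are short and that short integer preimages $\bu_1,\ldots,\bu_n$ of the unit vectors exist — is exactly the paper's frame (Lemma~\ref{lem:main}). Your step for the column norms is also the paper's (Lemma~\ref{lem:short_vectors} with a union bound). But your method for producing the $\bu_i$ has a genuine gap that would prevent you from reaching the claimed $r = \tilde\Omega(ns\sqrt{\log(1/\eps)})$ bound.

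The central issue is quantitative. Sampling $\bu_i \leftarrow \tilde\bu_i + \cD_{\Lambda^\perp(X), r'}$ with $r'$ slightly above $\eta_\eps(\Lambda^\perp(X))$ produces a vector whose mass is spread over all $m$ coordinates, so the best you can hope for is $\|\bu_i\| = \tilde O(r'\sqrt{m})$. Plugging $q_2 = \tilde O(\sqrt m)$ into the hypothesis of Theorem~\ref{thm:sufficient} with $q_1 = \tilde O(s\sqrt n)$ gives a requirement $\sigma_m(R) \gtrsim q_1 q_2 = \tilde\Omega(s\sqrt{nm})$, which still grows with $m$ and does not match the claimed $m$-free bound; it is better than the $\tilde\Omega(mn)$ of~\cite{AGHS13} but strictly worse than what the theorem asserts once $m = n^{1+\Omega(1)}$. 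The paper's key observation is that one can do much better: it constructs $\bu_i$ whose support lies in the first $\lfloor 10t\rfloor = O(n\log(\sigma_1 n))$ coordinates and whose entries lie in $\{-2,\ldots,2\}$, so $\|\bu_i\| = \tilde O(\sqrt n)$ \emph{independent of $m$}. This is achieved not by Gaussian sampling over $\Lambda^\perp(X)$ but by a pigeon-hole argument (Claim~\ref{claim:pigeon-hole}) combined with an Azuma/martingale argument (Claims~\ref{claim:main1}--\ref{claim:main3}): one tracks the $\{-1,0,1\}$-span $\cS_j$ of the first $j$ columns and shows, using the near-translation-invariance of the discrete Gaussian (Lemma~\ref{lem:random-distribution}), that if $\cS_j \cap (\cS_j + \be_1)$ stays empty then fresh columns escape $\cS_j$ with constant probability, which after $O(n\log(\sigma_1 n))$ steps contradicts the pigeon-hole bound.

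Two further concerns. First, your ``delicate step'' — a dual-lattice argument to bound $\eta_\eps(\Lambda^\perp(X))$ directly — is essentially the route taken by~\cite{AGHS13} (dual lattice plus Banaszczyk transference), which is precisely what produces the linear-in-$m$ loss the paper is trying to eliminate; the paper deliberately avoids it. Moreover the role of the short $\bu_i$ in Theorem~\ref{thm:sufficient} is internally to bound $\lambda_{m-n}(A)$ and hence $\eta_\eps(A)$ via Lemmas~\ref{lem:KLP} and~\ref{lem:smoothing-param}; if you could already bound $\eta_\eps(\Lambda^\perp(X))$ well enough to run your step three, you would bypass the quality condition and go straight to Lemma~\ref{lem:AGHS-stat-dist}, so the logic as stated is circular in structure. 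Second, Definition~\ref{def:quality} requires the $\bu_i$ to be pairwise orthogonal (or else one must pass to the relaxed $\ell_1$ variant noted at the end of Section~3); your Gaussian-sampled preimages come with no orthogonality guarantee, whereas the paper enforces it by redoing the construction for $\bu_{i+1}$ with $\bu_1,\ldots,\bu_i$ appended as extra rows of $X$.
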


The formal statement and proof appear as Theorem~\ref{thm:main} in Section~\ref{sec:DG}. We also mention there how to obtain a more general version for the case in which the columns of $X$ are chosen from a discrete Gaussian distribution over an arbitrary lattice $\cL \subset \R^n$; this follows easily from the case of $\Z^n$.

The motivation for studying the distribution $\cE_{X,r}$ for this choice of $X$ comes from a recent breakthrough construction of multilinear maps from ideal lattices by Garg, Gentry, and Halevi~\cite{GGH13} (see~\cite{AGHS13} for details). Also, more recently, our result has been used by Ling et al.~\cite{LPSS13} to give a lattice-based traitor tracing scheme.

Our result improves on the main result of Agrawal, Gentry, Halevi, and Sahai~\cite{AGHS13} who showed a statement similar to Theorem~\ref{thm:main-informal} with  the bound on $r$ depending linearly on $m$. Improving this dependence on $m$ was left by them as an open question, which we answer here.

\paragraph{Our Techniques.}

As was done in~\cite{AGHS13}, the first part of our proof is to reduce the question to a question about bounding the successive minima of a so-called orthogonal lattice. Namely, let $A \subset \Z^m$ be the \emph{orthogonal lattice} of $X$, i.e., the lattice formed by all integer vectors in $\Z^m$ that are orthogonal to all $n$ rows of $X$,
which is (typically) a lattice of rank $m-n$. Our goal then is to find
$m-n$ linearly independent vectors in it whose maximum length is as small as possible, i.e., we need to bound the last successive minimum $\lambda_{m-n}$ of the lattice $A$. We feel that the techniques used in this part are rather standard, and instead of repeating the proofs, we cite the relevant lemmas from~\cite{AGHS13}.

     The main part of our proof (and that of~\cite{AGHS13}) is therefore bounding $\lambda_{m-n}(A)$. The approach we take is using a simple, yet powerful, idea from a recent result by Kuperberg, Lovett, and Peled~\cite{KLP12} (who were interested in an entirely different application, namely, showing the existence of some combinatorial structures like $t$-wise independent permutations). Using this idea, we get that in order to prove a bound on the successive minima of the orthogonal lattice, it suffices to satisfy two constraints: (1) that $X$ has small entries, and (2) that there are short vectors $\bu_1,\ldots,\bu_n$ such that $X \cdot \bu_i = \be_i$ for $1\leq i \leq n$, where $\be_i$ are unit vectors in $n$ dimensions.

     This already gives us sufficient constraints on $X$ and $r$ under which $\cE_{X,r}$ is guaranteed to be close to a discrete Gaussian. In order to complete the proof of Theorem~\ref{thm:main-informal}, it suffices to show that the $X$ chosen there satisfies these two constraints with high probability. The first constraint, namely, that $X$ has small entries follows easily by definition. Proving the second constraint is trickier and we show this in Section 4 using a careful application of the pigeon-hole principle.

\paragraph{Comparison with~\cite{AGHS13}.}
As mentioned above, our Theorem~\ref{thm:main-informal} is similar to the main 
result of~\cite{AGHS13}, the main difference being the quantitative improvement 
in the bound on $r$.
Namely, the bound on $r$ obtained by~\cite{AGHS13} is $\tilde{\Omega}(mn \log{(1/\eps)})$ which could be much worse than our bound for large $m$. We note that our result gives a worse bound if $s \gg m$. However, for applications like~\cite{GGH13,LPSS13}, one chooses $s =  \Theta(\sqrt{\log (n/\eps)})$ in which case our result gives a better bound for all $m,n,\eps$.

Another advantage of our approach is that our Theorem~\ref{thm:sufficient} (which we use to prove Theorem~\ref{thm:main-informal}) applies to a \emph{general} choice of $X$, and not just to a random $X$. This might be useful in future applications. 

In terms of techniques, as mentioned above, the most challenging step in both our proof and the proof in~\cite{AGHS13} is to bound the last successive minimum $\lambda_{m-n}$ of the orthogonal lattice $A$, which is a lattice of rank $m-n$ in dimension $m$. The way this is done in~\cite{AGHS13} is by first defining a superlattice $A_q$ of $A$ which is of \emph{full rank}, 
and then considering the dual lattice $M_q$ of $A_q$. They then obtain a lower bound on $\lambda_{n+1}(M_q)$, which using Banasczcyk's transference theorem~\cite{Ban93}, implies an upper bound on $\lambda_{m-n}(A_q)$. Finally, they argue that this is also an upper bound on $\lambda_{m-n}(A)$. This somewhat indirect proof results in the bound on $r$ depending linearly on $m$. Improving this dependence was left as an open question in~\cite{AGHS13}. In comparison, our method to bound $\lambda_{m-n}(A)$ is more direct. 

\paragraph{Other Related Work.}

Micciancio and Peikert~\cite{MP13} recently showed hardness results for the main lattice-based cryptographic problems. One of the key ingredients in the proof was a new convolution theorem which was a strengthening of a previous similar result by Peikert~\cite{Pei10}. This theorem, like our result, also looks at sums of discrete Gaussian samples. The difference is that in their statement the combination vector $\bv$ is fixed and the matrix $X$ comprises of (spherical) discrete Gaussian with parameter $s$ significantly bigger than the smoothing parameter of the underlying lattice. In our case, the matrix is fixed ``once and for all", and only $\bv$ varies. In our setting we essentially analyze the sum of $1$-dimensional discrete Gaussians in $n$-dimensional space. 

We note that by combining the result of~\cite{MP13} with that of~\cite{AGHS13}, it might be possible to derive a result similar to our Theorem~\ref{thm:main-informal}, i.e., to improve the dependence of $r$ on $m$. The idea would be to partition the sum $\sum_{i=1}^m v_i \bx_i$ into small blocks, argue that each block is close to a discrete Gaussian (using~\cite{AGHS13}) and then arguing that the overall sum must therefore also be close to a discrete Gaussian (using~\cite{MP13}). As this approach would almost certainly lead to worse parameters and a far more complicated proof, we did not attempt to pursue it.

\section{Preliminaries}
\label{sec:prelim}

All logarithms, unless otherwise stated, are to the base $2$. Natural logarithms, i.e., to the base $e$, are denoted by $\ln$. The norm $\| \cdot \|$ considered in this paper is the $\ell_2$ norm, unless otherwise stated. The transpose and inverse of a matrix $A$ are denoted as $A^T$ and $A^{-1}$, respectively. The inverse transpose of a matrix $A$ is denoted as $A^{-T}$.

Let $D$ be a discrete distribution. We denote by $D[x]$ the probability it assigns to $x$, and by $X \sim D$ a random variable distributed according to $D$. For two distributions $D,D'$ their statistical distance is $\Delta(D;D')=\tfrac{1}{2}\sum_x |D[x]-D'[x]|$.

\paragraph{Gaussian function.}

For any $s > 0$, the (spherical) Gaussian function on $\R^n$ with parameter $s$ is defined as $\rho_{s} (\bx) = \exp (- \pi \| \bx \|^2/s^2)$ for all $\bx \in \R^n$. If $s = 1$, then it is omitted. For a rank-$n$ matrix $S \in \R^{m \times n}$, the ellipsoidal Gaussian function on $\R^n$ with covariance matrix $\Sigma = S^T S$ is defined as:
\[
\forall \bx \in \R^n, \; \; \rho_{S}(\bx) = \exp\(-\pi \bx^T (S^T S)^{-1} \bx\right) \; .
\]
When $S = s I_n$, $\rho_{S}$ is the same as $\rho_{s}$. Also, if $m = n$, then $\rho_{S}(\bx) = \rho (S^{-T}\bx)$.

\paragraph{Lattices.}

A lattice is a discrete additive subgroup of $\mathbb{R}^m$. A set of linearly independent vectors that generates a lattice is called a basis and is denoted by $\bB = \{\bb_1, \ldots, \bb_n\} \subset \mathbb{R}^m$ for integers $m \ge n \ge 1$. The lattice generated by the basis $\bB$ is \[ \cL = \cL(\bB) = \Big\{ \bB \bz = \sum_{i=1}^n z_i \bb_i \: : \: \bz \in \mathbb{Z}^n \Big\} \; .\] We say that the rank of this lattice is $n$ and its dimension is $m$. For $i=1,\ldots,n$, the {\em successive minimum} $\lambda_i(\cL)$ is defined as the smallest value such that a ball of radius $\lambda_i(\cL)$ centered around the origin contains at least $i$ linearly independent lattice vectors. 

The dual lattice $\cL^*$ of $\cL$ is defined as $\cL^* = \{\bx \in \R^n \: : \: \langle \cL, \bx\rangle \subseteq \Z\}.$ For a lattice $\cL$ and positive real $\eps > 0$, the smoothing parameter $\eta_{\eps}(\cL)$ is the smallest real $s>0$ such that $\rho_{1/s}(\cL^* \setminus \{\mathbf{0}\}) \le \eps$, where $\rho_{1/s}(A)$ for a set $A$ denotes $\sum_{\bx \in A} \rho_{1/s}(\bx)$. 

For a rank-$n$ lattice $\cL$, $S \in \R^{m\times n}$, and $\bc \in \R^n$, the ellipsoidal Gaussian distribution with parameter $S$ and support $\cL + \bc$ is defined as:
\[
\forall \bx \in \cL + \bc, \; \cD_{\cL+\bc, S}(\bx) = \frac{\rho_{S}(\bx)}{\rho_{S}(\cL+\bc)} \; ,
\]
where $\rho_{S}(A)$ for a set $A$ denotes $\sum_{\bx \in A} \rho_{S}(\bx)$.

\paragraph{Matrices and Singular Values.}

For a rank-$n$ matrix $S \in \R^{m \times n}$, there exist orthogonal matrices $U, V$ (i.e., $U^T U = I$, and $V^T V = I$) and a diagonal matrix \[\Sigma = \diag(\sigma_1, \ldots, \sigma_n) \in \R^{m \times n}, \; \text{ with } \; \sigma_1 \ge \cdots \ge \sigma_n \ge 0 \;,\] such that $S = U\Sigma V^T$. The values $\sigma_1, \ldots, \sigma_n$ are called the singular values of $S$. The largest singular value of $S$ is denoted as $\sigma_1(S)$, and the least singular value is $\sigma_n(S)$.

\paragraph{Some Known Results.}


\begin{lemma}[{{\cite[Lemma 3]{AGHS13}}}] For any $c \ge 1/\sqrt{2\pi}$, rank-$n$ lattice $\cL$,  $0 < \eps < 1$,  and matrix $S$ s.t. $\sigma_n(S) \ge \eta_{\eps}(\cL)$, we have
\label{lem:short_vectors}
\[
\Pr_{\bv \leftarrow \cD_{\cL, S}} (\| \bv \| \ge \sigma_1(S) c \sqrt{n}) \le \frac{1 + \eps}{1 - \eps} (c \sqrt{2 \pi e} \cdot e^{-\pi c^2})^n \;.
\]
\end{lemma}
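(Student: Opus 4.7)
The plan is to reduce the ellipsoidal case to Banaszczyk's classical tail bound for spherical discrete Gaussians via a linear change of variables. I let $W := (S^T S)^{-1/2}$, the symmetric positive-definite $n \times n$ matrix whose eigenvalues are $1/\sigma_1(S), \ldots, 1/\sigma_n(S)$. A direct computation shows
\[
\rho_S(\bv) = \exp\bigl(-\pi\, \bv^T (S^T S)^{-1}\bv\bigr) = \rho(W\bv),
\]
so the linear map $\bv \mapsto \bu := W\bv$ pushes the ellipsoidal discrete Gaussian $\cD_{\cL,S}$ forward to the \emph{spherical} discrete Gaussian $\cD_{\cL', 1}$ on the transformed lattice $\cL' := W\cL$.

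Next I would transport the smoothing-parameter hypothesis. Because $W$ is symmetric and invertible, $(\cL')^{*} = W^{-1}\cL^{*}$. For every nonzero $\by \in \cL^{*}$,
\[
\|W^{-1}\by\| \ge \sigma_n(W^{-1})\,\|\by\| = \sigma_n(S)\,\|\by\| \ge \eta_\eps(\cL)\,\|\by\|,
\]
and therefore $\rho((\cL')^{*} \setminus \{\mathbf{0}\}) \le \rho_{1/\eta_\eps(\cL)}(\cL^{*} \setminus \{\mathbf{0}\}) \le \eps$, i.e., $\eta_\eps(\cL') \le 1$.

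For the length comparison, $\|\bv\| = \|W^{-1}\bu\| \le \sigma_1(W^{-1})\|\bu\| = \sigma_1(S)\|\bu\|$, so the event $\|\bv\| \ge \sigma_1(S)\, c \sqrt{n}$ forces $\|\bu\| \ge c\sqrt{n}$, and thus
\[
\Pr_{\bv \sim \cD_{\cL,S}}\!\bigl[\|\bv\| \ge \sigma_1(S)\,c\sqrt{n}\bigr] \le \Pr_{\bu \sim \cD_{\cL',1}}\!\bigl[\|\bu\| \ge c\sqrt{n}\bigr].
\]
The right-hand side is then bounded by Banaszczyk's classical inequality: for any $c \ge 1/\sqrt{2\pi}$ and any lattice $\Lambda$, the $\rho$-mass of the points of $\Lambda$ with norm at least $c\sqrt{n}$ is at most $(c\sqrt{2\pi e}\cdot e^{-\pi c^2})^n\, \rho(\Lambda)$. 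The extra factor $(1+\eps)/(1-\eps)$ in the stated bound is slack coming from the standard smoothing-parameter estimate comparing $\rho(\cL')$ with its ``ideal'' value within a $(1\pm\eps)$ window; it is not strictly needed in this unshifted setting, but it makes the statement robust to minor variants.

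The only real obstacle is the bookkeeping: cleanly identifying the singular values of $W$ from those of $S$ and verifying that the smoothing-parameter hypothesis survives the linear change of variables. Once those are in place, Banaszczyk's bound closes the argument in a single line.
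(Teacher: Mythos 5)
Your proof is correct, and it follows the standard route for this kind of statement (which is also what \cite{AGHS13} does to prove their Lemma~3): use the symmetric positive-definite change of variables $W=(S^TS)^{-1/2}$ to push the ellipsoidal discrete Gaussian on $\cL$ forward to the spherical discrete Gaussian on $W\cL$, bound the singular values of $W^{-1}$ to relate the two radii, and then invoke Banaszczyk's tail bound. Note that the paper itself gives no proof of this lemma --- it is quoted verbatim from \cite{AGHS13} under ``Some Known Results'' --- so there is no in-paper argument to compare against line by line.

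Two small remarks. First, you are right that, in the centered case, the bound you obtain already holds without the factor $\tfrac{1+\eps}{1-\eps}$ and without the hypothesis $\sigma_n(S)\ge\eta_\eps(\cL)$ (which you only use to derive the irrelevant fact $\eta_\eps(\cL')\le 1$); the stated form with that factor is simply slack, presumably retained by \cite{AGHS13} so that the same statement covers shifted variants $\cD_{\cL+\bc,S}$, where Banaszczyk's argument picks up a factor of~$2$ and one compares $\rho(\cL+\bc)$ with $\rho(\cL)$ using smoothing. Second, a cosmetic point: the dual identity $(W\cL)^* = W^{-T}\cL^* = W^{-1}\cL^*$ relies on $W$ being symmetric --- you do note this, but since it is the only place symmetry matters it is worth flagging explicitly. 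Aside from that, the singular-value bookkeeping ($\sigma_n(W^{-1})=\sigma_n(S)$, $\sigma_1(W^{-1})=\sigma_1(S)$) is correct and the argument closes cleanly.
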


\begin{lemma}[{{\cite[Lemma 4]{AGHS13}}}] 
\label{lem:almost-unif}
For any rank-$m$ lattice $\cL$, $0<\eps<1$, vector $\bc \in \R^m$, and full-rank matrix $S \in \R^{m \times m}$, such that $\sigma_m(S) \ge \eta_{\eps}(\cL)$, we have \[\rho_{S}(\cL + \bc) \in \Big{[}\frac{1-\eps}{1 + \eps},1\Big{]} \cdot \rho_S(\cL) \; .\]
\end{lemma}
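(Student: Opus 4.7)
The plan is to reduce the ellipsoidal statement to the classical spherical smoothing-parameter lemma via the linear change of variables $\bx \mapsto S^{-T}\bx$. Recall the identity $\rho_S(\bx) = \rho(S^{-T}\bx)$ noted in the preliminaries (valid because $S$ is square and invertible). Setting $\cL' := S^{-T}\cL$ (still a rank-$m$ lattice in $\R^m$) and $\bc' := S^{-T}\bc$, this identity gives $\rho_S(\cL + \bc) = \rho(\cL' + \bc')$ and $\rho_S(\cL) = \rho(\cL')$. So it suffices to show $\rho(\cL' + \bc') \in \bigl[(1-\eps)/(1+\eps),\,1\bigr]\cdot \rho(\cL')$, i.e., the spherical smoothing statement at parameter $1$.

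The first step is then to check that the singular-value hypothesis translates into $\eta_\eps(\cL') \le 1$. Using the dual-lattice identity $(\cL')^{*} = S^{T}\cL^{*}$ (valid for an invertible linear map), one computes
\[
\rho\bigl((\cL')^{*}\setminus\{\mathbf{0}\}\bigr) \;=\; \sum_{\by \in \cL^{*}\setminus\{\mathbf{0}\}} e^{-\pi\|S^{T}\by\|^{2}}.
\]
Since $\sigma_m(S) = \sigma_m(S^{T}) \ge \eta_\eps(\cL)$, we have $\|S^{T}\by\| \ge \eta_\eps(\cL)\cdot\|\by\|$ for every $\by$, so the sum is bounded term-wise by $\rho_{1/\eta_\eps(\cL)}(\cL^{*} \setminus \{\mathbf{0}\}) \le \eps$ by definition of the smoothing parameter.

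The remaining step is the textbook spherical argument via Poisson summation:
\[
\rho(\cL' + \bc') \;=\; \frac{1}{\det(\cL')} \sum_{\by \in (\cL')^{*}} \rho(\by)\, e^{2\pi i \langle \bc', \by\rangle} \;=\; \frac{1}{\det(\cL')} \sum_{\by\in(\cL')^{*}} \rho(\by)\cos\bigl(2\pi\langle\bc',\by\rangle\bigr),
\]
where the second equality uses that the left-hand side is real. Isolating the $\by = \mathbf{0}$ term and using the previous-step bound $\eps$ on the rest gives $\rho(\cL' + \bc') \in \bigl[(1-\eps)/\det(\cL'),\,(1+\eps)/\det(\cL')\bigr]$. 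The upper bound of $1$ in the stated interval follows from $\cos \le 1$, which yields $\rho(\cL'+\bc') \le \rho(\cL')$ directly; the lower bound $(1-\eps)/(1+\eps)$ follows by combining $\rho(\cL'+\bc') \ge (1-\eps)/\det(\cL')$ with $\rho(\cL') \le (1+\eps)/\det(\cL')$. Nothing here is genuinely difficult; the only slightly delicate points are the dual-lattice identity under the linear map and carrying the volume factor $\det(\cL')$ consistently through the change of variables.
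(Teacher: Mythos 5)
This lemma is imported from \cite{AGHS13} as a known result; the paper gives no proof of it, so there is no in-paper argument to compare against. Your proof is the standard derivation (the one behind \cite[Lemma~4]{AGHS13} and the original spherical version in Micciancio--Regev) and is essentially correct: reduce to the spherical case via $\bx\mapsto S^{-T}\bx$, check $\eta_\eps(S^{-T}\cL)\le 1$, then apply Poisson summation and isolate the zero dual vector. One small slip: the dual-lattice identity should read $(\cL')^{*}=(S^{-T}\cL)^{*}=S\cL^{*}$, not $S^{T}\cL^{*}$, since $(A\cL)^{*}=A^{-T}\cL^{*}$ for invertible $A$ and here $A=S^{-T}$ gives $A^{-T}=S$. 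This is harmless in your argument because you immediately pass to $\sigma_m(S)=\sigma_m(S^{T})$; with the corrected identity you would simply use $\|S\by\|\ge\sigma_m(S)\|\by\|\ge\eta_\eps(\cL)\|\by\|$ and the same termwise comparison to $\rho_{1/\eta_\eps(\cL)}(\cL^{*}\setminus\{\mathbf 0\})\le\eps$. The rest --- the bracketing $\rho(\cL'+\bc')\in[(1-\eps)/\det\cL',\,(1+\eps)/\det\cL']$, the termwise upper bound $\rho(\cL'+\bc')\le\rho(\cL')$ from $\cos\le 1$, and the division to obtain the ratio interval $\bigl[\tfrac{1-\eps}{1+\eps},1\bigr]$ --- is all correct.
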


\begin{lemma}[{{\cite[Corollary 2]{AGHS13}}}]
\label{lem:random-distribution}
For any full-rank lattice $\cL \in \R^n$, $\eps > 0$, $c > 2$, and a rank-$n$ matrix $S$ such that $\sigma_n(S) \ge (1+ c) \eta_{\eps}(\cL)$, the following holds. For any $T \subset \cL$, and any $\bv \in \cL$, \[\cD_{\cL, S}(T) - \cD_{\cL, S}(T - \bv) \leq \frac{\erf(q/2 + 2q/c)}{\erf(2q)} \cdot \frac{1 + \eps}{1 - \eps} \;,\]
where $q = \frac{\|\bv\| \sqrt{\pi}}{\sigma_n(S)}$, and $\erf$ is the Gaussian error function defined as $\erf(x) = \frac{2}{\sqrt {\pi}}  \int_0^x e^{-t^2}\,dt. $
\end{lemma}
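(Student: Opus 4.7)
The plan is to reduce the problem to a one-dimensional continuous Gaussian calculation, using the smoothing hypothesis on $S$ to control the discretization error.

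First, the worst-case $T$ can be identified explicitly: the function $T \mapsto \cD_{\cL,S}(T) - \cD_{\cL,S}(T-\bv)$ is maximized by $T^{\star} := \{\by \in \cL : \rho_S(\by) \ge \rho_S(\by - \bv)\}$, since including any point on which $\rho_S(\by) < \rho_S(\by-\bv)$ only decreases the difference. Using $\rho_S(\bx) = \exp(-\pi \bx^T (S^T S)^{-1} \bx)$, the defining condition reduces to the linear inequality $\langle \by, (S^T S)^{-1}\bv\rangle \le \tfrac12 \|(S^TS)^{-1/2}\bv\|^2$, so $T^{\star}$ is the intersection of $\cL$ with a half-space whose bounding hyperplane passes through $\bv/2$ and is normal to $(S^T S)^{-1}\bv$. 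It therefore suffices to upper bound $(\rho_S(T^{\star}) - \rho_S(T^{\star} - \bv))/\rho_S(\cL)$.

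Next, I would slice $\cL$ into cosets parallel to this bounding hyperplane. Within each slice, the ratio $\rho_S(\by - \bv)/\rho_S(\by)$ depends only on the slice index, so the contribution of the slice to the numerator factors cleanly into a one-dimensional Gaussian weight times the total Gaussian mass of that slice. Under the hypothesis $\sigma_n(S) \ge (1+c)\eta_{\eps}(\cL)$, Lemma~\ref{lem:almost-unif} applied slice by slice (and once more to $\rho_S(\cL)$ itself) replaces each such discrete mass by its continuous Gaussian analog up to a multiplicative factor in $[(1-\eps)/(1+\eps),\,1]$; this is exactly where the $(1+\eps)/(1-\eps)$ factor enters. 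After this step, the problem is a purely one-dimensional continuous calculation along the direction $(S^T S)^{-1}\bv$.

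Finally, that one-dimensional continuous calculation evaluates in terms of the error function. The numerator $\erf(q/2 + 2q/c)$ comes from integrating the difference of two continuous 1D Gaussians whose centers are shifted by $\|(S^TS)^{-1/2}\bv\|\sqrt{\pi} \le q$: the leading $q/2$ is the direct analog of half the shift (this is the usual continuous TV formula $\erf(\text{shift}\cdot\sqrt{\pi}/2)$), while the $2q/c$ correction absorbs the slack between $\sigma_n(S)$ and its allowed lower bound $(1+c)\eta_{\eps}(\cL)$ when one rewrites the integral in terms of a Gaussian of standardized width. The denominator $\erf(2q)$ arises as a matching lower bound on the continuous Gaussian mass concentrated in a slab of width on the order of $\|\bv\|$ around the bounding hyperplane of $T^{\star}$, which is where essentially all of the mass difference lives. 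The main obstacle will be propagating the multiplicative approximation errors cleanly through the slicing argument and the change of variables so that one recovers the specific form $\erf(q/2 + 2q/c)/\erf(2q)$ rather than a weaker ratio; the hypothesis $c > 2$ is what guarantees that the resulting bound is strictly below $1$ even when $q$ is small, since for such $c$ one has $q/2 + 2q/c < 2q$.
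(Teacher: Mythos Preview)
The paper does not contain a proof of this lemma; it is quoted verbatim as a known result from~\cite{AGHS13} (Corollary~2 there) and used as a black box in the proof of Claim~\ref{claim:main1}. So there is no ``paper's own proof'' to compare against.

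As for your outline on its own merits: the reduction to the extremal half-space $T^\star$ and the observation that $\rho_S(\by-\bv)/\rho_S(\by)$ depends on $\by$ only through $\langle \by,(S^TS)^{-1}\bv\rangle$ are correct and are indeed the standard first moves. Two points would need tightening before this becomes a proof. First, when you ``slice $\cL$ into cosets parallel to the bounding hyperplane'' and invoke Lemma~\ref{lem:almost-unif} slice by slice, you are implicitly assuming that the smoothing hypothesis on $\cL$ transfers to the rank-$(n-1)$ sublattice $\cL \cap \{\langle \cdot,(S^TS)^{-1}\bv\rangle=0\}$; this does not follow automatically from $\sigma_n(S)\ge(1+c)\eta_\eps(\cL)$ and is exactly the kind of step where the extra headroom factor $(1+c)$ gets spent. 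Second, your account of the two $\erf$ terms is too heuristic: the numerator $\erf(q/2+2q/c)$ and especially the denominator $\erf(2q)$ do not drop out of a straight ``continuous TV plus smoothing error'' computation in the way you describe, and your explanation of $\erf(2q)$ as ``mass in a slab'' is not how a denominator would arise in an upper bound. To actually recover the stated form you would have to look at the argument in~\cite{AGHS13}, where the ratio comes from a specific rearrangement rather than the generic slicing you sketch.
\medskip
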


\begin{lemma} [{{\cite[Azuma's inequality, Chapter 7]{AS04}}}]
\label{lem:Azuma}
Let $X_0, X_1, \ldots$ be a set of random variables that form a discrete-time sub-martingale, i.e., for all $n\ge 0$, \[E[X_{n+1} \: | \: X_1, \ldots, X_n] \ge X_n \;. \] If for all $n \ge 0$, $|X_{n} - X_{n-1}| \le c$, then for all integers $N$ and positive real $t$, \[\Pr(X_N - X_0 \le - t) \le \exp\left(\frac{ -t^2}{2 Nc^2}\right) \; . \]
\end{lemma}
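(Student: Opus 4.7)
My plan is to prove Azuma's inequality by the standard moment-generating-function (Chernoff-style) argument. For any $\lambda > 0$, Markov's inequality applied to the positive random variable $e^{-\lambda(X_N - X_0)}$ yields
\[
\Pr(X_N - X_0 \le -t) \;=\; \Pr\!\left(e^{-\lambda(X_N - X_0)} \ge e^{\lambda t}\right) \;\le\; e^{-\lambda t}\, E\!\left[e^{-\lambda(X_N - X_0)}\right],
\]
so the entire task reduces to upper bounding the exponential moment $E[e^{-\lambda(X_N - X_0)}]$ and then optimizing over $\lambda$ at the end.

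Next, I would set $Y_n := X_n - X_{n-1}$, so that $|Y_n| \le c$ by hypothesis and $E[Y_n \mid X_0, \ldots, X_{n-1}] \ge 0$ by the sub-martingale property. Writing $e^{-\lambda(X_N - X_0)} = \prod_{n=1}^{N} e^{-\lambda Y_n}$ and conditioning iteratively on $X_0, \ldots, X_{N-1}$, then on $X_0, \ldots, X_{N-2}$, and so on, reduces the problem to the one-step estimate
\[
E\!\left[e^{-\lambda Y_n} \,\big|\, X_0, \ldots, X_{n-1}\right] \;\le\; e^{\lambda^2 c^2/2}.
\]
This is a deterministic statement about a random variable $Z$ satisfying $|Z| \le c$ and $\mu := E[Z] \ge 0$, which I would prove by convexity of $z \mapsto e^{-\lambda z}$: on $[-c, c]$ the chord bound gives $e^{-\lambda z} \le \frac{c - z}{2c}\, e^{\lambda c} + \frac{c + z}{2c}\, e^{-\lambda c}$, so taking expectations yields $E[e^{-\lambda Z}] \le \cosh(\lambda c) - (\mu/c)\sinh(\lambda c)$. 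Since $\lambda > 0$ and $\mu \ge 0$, this is at most $\cosh(\lambda c) \le e^{\lambda^2 c^2 / 2}$, the last inequality being the elementary Taylor-series comparison $(2k)! \ge 2^k k!$.

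Multiplying the one-step bounds across all $N$ steps gives $E[e^{-\lambda(X_N - X_0)}] \le e^{N \lambda^2 c^2 / 2}$, hence $\Pr(X_N - X_0 \le -t) \le \exp(-\lambda t + N \lambda^2 c^2 / 2)$. Choosing $\lambda = t / (N c^2)$ optimizes the right-hand side and produces the claimed bound $\exp(-t^2 / (2 N c^2))$. The one point that requires care is the sub-martingale (rather than martingale) hypothesis: the one-step bound must be \emph{monotone non-increasing} in $\mu$ so that the positive drift can only help. This is precisely why the direction of the tail matters — we control the \emph{lower} tail $X_N - X_0 \le -t$, apply Markov with $\lambda > 0$, and exploit $\mu \ge 0$ through the $-(\mu/c)\sinh(\lambda c)$ term. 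Once this directional bookkeeping is arranged correctly, the remainder of the proof is a routine textbook calculation.
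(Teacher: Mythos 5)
Your proof is correct and is the standard Chernoff/Hoeffding argument: Markov on $e^{-\lambda(X_N-X_0)}$, the convexity chord bound $e^{-\lambda z}\le \tfrac{c-z}{2c}e^{\lambda c}+\tfrac{c+z}{2c}e^{-\lambda c}$ on $[-c,c]$, the one-step estimate $E[e^{-\lambda Z}]\le\cosh(\lambda c)-(\mu/c)\sinh(\lambda c)\le\cosh(\lambda c)\le e^{\lambda^2 c^2/2}$, iterated conditioning, and the optimal choice $\lambda=t/(Nc^2)$; your emphasis on the sign of the drift term $-(\mu/c)\sinh(\lambda c)$ is exactly where the sub-martingale hypothesis enters. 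Note that the paper does not prove this lemma at all --- it is quoted from Alon and Spencer (\emph{The Probabilistic Method}, Ch.~7) --- so there is no in-paper argument to compare against; what you have written is the standard textbook derivation.
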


\section{Sufficient Condition for \texorpdfstring{$X$}{X}}
\label{sec:sufficient}

Let $X \in \Z^{n \times m}$ be a full row-rank matrix. For a full-rank matrix $R \in \R^{m\times m}$, and $\bc  \in \R^m$, we consider the distribution $\cE_{X, R, \bc }$ which is a generalization of $\cE_{X, r}$ defined in Section~\ref{sec:intro}.

\[
\cE_{X, R, \bc } := \{ X \cdot \bv \: : \: \bv \leftarrow \cD_{\Z^m + \bc, R}\} \;.
\]

In this section, we establish a sufficient condition on $X$ such that $\cE_{X, R, \bc }$ is statistically close to a discrete Gaussian distribution, namely, to $\cD_{\Z^n + X \bc, RX^T}$.

\begin{definition}\em
\label{def:quality}
Let $X$ be an $n \times m$ matrix and let $\bx_1, \ldots, \bx_n \in \Z^m$ be the rows of $X$. For any positive $q_1 = q_1(n), q_2 = q_2(n)$, the matrix $X$ is said to have {\em quality} $(q_1, q_2)$ if all column-vectors of $X$ have $\ell_2$ norm at most $ q_1$ and there exist pairwise orthogonal $\bu_1, \ldots, \bu_n \in \Z^m$, such that for all $i, j \in [n]$, $\bu_i \cdot \bx_j = \delta_{ij}$, and $\|\bu_i\|_2 \leq q_2$.
\end{definition}

Note that the fact that $X$ has quality $(q_1, q_2)$ for $q_1, q_2 \in \R^+$ implies that $X$ is a full row-rank matrix, and that $X \Z^m = \Z^n$.
In this section, we will prove the following theorem.

\begin{theorem}
\label{thm:sufficient}
For any $m> n\ge 1$, and $0<\eps < 1/3$, $\bc \in \R^m$, if $X$ is an $n \times m$ matrix of quality $(q_1, q_2)$, and $R \in \R^{m \times m}$ is full-rank such that \[\sigma_m(R) \ge (1 + q_1 q_2)  \cdot \sqrt{\frac{\ln(2(m-n)(1 + 1/\eps))}{\pi}}\; ,\] then
\[
\Delta(\cE_{X, R , \bc }, \: \cD_{\Z^n + X \bc, R X^T}) \le 2 \eps \; .
\]
\end{theorem}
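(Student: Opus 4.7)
The plan is to combine two ingredients: a short-basis bound for the kernel lattice $\Lambda := \Z^m \cap \ker X$ deduced from the quality hypothesis, and a clean decomposition of $\rho_R$ over the preimages of a given $\by$ that is licensed by the resulting smoothing inequality. Since the quality condition forces $X$ to have full row rank, $\Lambda$ has rank $m-n$, and the natural target is $\lambda_{m-n}(\Lambda) \le 1 + q_1 q_2$; this translates via the standard bound $\eta_\eps(L) \le \lambda_k(L)\sqrt{\ln(2k(1+1/\eps))/\pi}$ for a rank-$k$ lattice $L$ into precisely the hypothesis $\sigma_m(R) \ge \eta_\eps(\Lambda)$.

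For the short vectors, I would define $\phi: \Z^m \to \Z^m$ by $\phi(\bw) := \bw - \sum_{j=1}^n (X\bw)_j \bu_j$. The identity $X\bu_j = \be_j$ (equivalent to $\bu_j \cdot \bx_i = \delta_{ij}$) gives $X\phi(\bw) = 0$, so $\phi(\bw) \in \Lambda$. The map $\phi$ kills $\mathrm{span}(\bu_1,\ldots,\bu_n)$, an $n$-dimensional subspace (the $\bu_j$ being pairwise orthogonal and nonzero), so its image has $\R$-dimension $m-n$ and spans $\Lambda \otimes \R$. Consequently, among $\phi(\be_1),\ldots,\phi(\be_m)$ there are $m-n$ linearly independent vectors in $\Lambda$. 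Pairwise orthogonality of the $\bu_j$ together with the column-norm bound yields
\[
\|\phi(\be_i)\| \le 1 + \Big\|\sum_j X_{ji}\bu_j\Big\| = 1 + \sqrt{\sum_j X_{ji}^2 \|\bu_j\|^2} \le 1 + q_2\|X\be_i\| \le 1 + q_1 q_2,
\]
giving $\lambda_{m-n}(\Lambda) \le 1 + q_1 q_2$.

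For the main estimate, fix $\by \in \Z^n + X\bc$ and set $\Sigma := RR^T$ and $\bv_0 := \Sigma X^T (X\Sigma X^T)^{-1}\by$, the $\Sigma^{-1}$-minimum-norm preimage of $\by$ under $X$. Three properties of $\bv_0$ do the work: $X\bv_0 = \by$; $\bv_0$ is $\Sigma^{-1}$-orthogonal to $\ker X = \mathrm{span}(\Lambda)$ (since $\bv_0^T \Sigma^{-1}\bu = \by^T(X\Sigma X^T)^{-1} X\bu = 0$ for $\bu \in \ker X$); and $\rho_R(\bv_0) = \exp(-\pi \by^T(X\Sigma X^T)^{-1}\by) = \rho_{RX^T}(\by)$. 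Every preimage of $\by$ in $\Z^m+\bc$ has the form $\bv_0 + \mathbf{d} + \bw$ for a fixed $\mathbf{d} \in \mathrm{span}(\Lambda)$ (depending on $\by$) and $\bw \in \Lambda$. The $\Sigma^{-1}$-orthogonality makes the quadratic form split, so $\rho_R(\bv_0 + \mathbf{d} + \bw) = \rho_R(\bv_0)\cdot \rho_R(\mathbf{d}+\bw)$, and summing over $\bw$ gives
\[
\rho_R\bigl((\Z^m+\bc)\cap X^{-1}(\by)\bigr) = \rho_{RX^T}(\by)\cdot \rho_R(\mathbf{d}+\Lambda).
\]

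The last step is to show the shift factor $\rho_R(\mathbf{d}+\Lambda)/\rho_R(\Lambda)$ is approximately independent of $\by$. Since $\sigma_m(R) \ge \eta_\eps(\Lambda)$ by Step~1, the restriction of the ellipsoidal Gaussian $\rho_R$ to $\mathrm{span}(\Lambda)$ has least singular value at least $\sigma_m(R) \ge \eta_\eps(\Lambda)$, and the rank-$(m-n)$ analogue of Lemma~\ref{lem:almost-unif} (obtained by an orthonormal change of basis on $\mathrm{span}(\Lambda)$) yields $\rho_R(\mathbf{d}+\Lambda) \in [(1-\eps)/(1+\eps), 1]\cdot \rho_R(\Lambda)$ for every $\mathbf{d}$. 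Dividing the formula for $\cE_{X,R,\bc}(\by)$ by the target $\cD_{\Z^n + X\bc, RX^T}(\by) = \rho_{RX^T}(\by)/\rho_{RX^T}(\Z^n + X\bc)$ leaves only the $\by$-dependent factor $\rho_R(\mathbf{d}+\Lambda)$ and a $\by$-independent normalization; the standard routine (forced normalization of two probability distributions whose densities agree up to a multiplicative factor in $[(1-\eps)/(1+\eps),1]$) then bounds the statistical distance by $\eps/(1-\eps) \le 2\eps$ for $\eps < 1/3$. The main obstacle is the decomposition step: one has to identify the right preimage $\bv_0$, namely the $\Sigma^{-1}$-minimum one rather than the Euclidean minimum, so that the cross term vanishes \emph{and} the longitudinal factor matches $\rho_{RX^T}(\by)$ exactly. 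Once that is set up, the smoothing inequality is ready to kick in.
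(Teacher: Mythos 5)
Your proposal is correct and follows essentially the same route as the paper: (1) build the short vectors $\be_k - \sum_i X_{ik}\bu_i$ spanning the kernel lattice to bound $\lambda_{m-n}$, (2) apply the Micciancio--Regev smoothing bound to get $\sigma_m(R)\ge\eta_\eps(\Lambda)$, (3) split the Gaussian mass over each fiber $X^{-1}(\by)\cap(\Z^m+\bc)$ at the minimum-norm preimage and invoke almost-uniformity of Gaussian mass over cosets of $\Lambda$. The only presentational difference is that the paper first changes variables by $R^{-T}$ (setting $X' = XR^T$) and then does a Euclidean Pythagorean split with the ordinary pseudo-inverse, whereas you work in the original coordinates with the $\Sigma^{-1}$-inner product; these are the same computation. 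One small slip to fix: with the paper's convention $\rho_S(\bx)=\exp(-\pi\bx^T(S^TS)^{-1}\bx)$, you should set $\Sigma := R^TR$, not $RR^T$, so that $\rho_R$ has covariance $\Sigma$ and $\rho_{RX^T}$ has covariance $X\Sigma X^T$; with $\Sigma=RR^T$ the identity $\rho_R(\bv_0)=\rho_{RX^T}(\by)$ does not hold. This does not affect the argument, since $\sigma_m(R)=\sigma_m(R^T)$.
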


Define a lattice $A = A(X)$ as the lattice containing integer vectors in $\Z^m$ orthogonal to all the row vectors of $X$,
\[
A = A(X) := \{ \bv \in \Z^m : X \cdot \bv = 0\} \;.
\]
Note that the rank of $A$ is $m-n$ since $X$ has rank $n$ and the vectors in $A$ span the vector space orthogonal to the $n$ row vectors of $X$. The following result is a slight extension of~{{\cite[Lemma 10]{AGHS13}}} (mentioned already in an unpublished version of~\cite{LPSS13}). It shows (in the case of centered spherical distribution) that if $r$ is bigger than the smoothing parameter of $A$, then $\cE_{X, r}$ is statistically indistinguishable from $\cD_{\Z^n, rX^T}$.

\begin{lemma}
\label{lem:AGHS-stat-dist}
Let $X$ be an $n \times m$ full row-rank matrix and let $A = A(X)$ be as defined above. For any full-rank matrix $R  \in \R^{m \times m}$, $\bc  \in \R^m$, such that $\sigma_m(R) > \eta_{\eps}(A)$ for some $\eps < 1/3$,
\[
\Delta(\cE_{X, R , \bc }, \: \cD_{X\Z^m + X\bc, R X^T }) \le 2 \eps \; .
\]
\end{lemma}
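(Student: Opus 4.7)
The plan is to compare probabilities pointwise. Fix $\by \in X\Z^m + X\bc$ and any preimage $\bw^{(0)} \in \Z^m + \bc$ with $X\bw^{(0)} = \by$. Since $A = \{\bv \in \Z^m : X\bv = 0\}$ is precisely the set of shifts that preserve the image under $X$, the fiber of $\by$ in $\Z^m + \bc$ is $\bw^{(0)} + A$, and so
\[
\cE_{X,R,\bc}[\by] \;=\; \frac{\rho_R(\bw^{(0)} + A)}{\rho_R(\Z^m + \bc)}\;.
\]
My goal is to show that the numerator equals $\rho_{RX^T}(\by)\cdot\rho_R(A)$ up to the smoothing error, independent of the choice of $\bw^{(0)}$.

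To do this cleanly, I will work with the $R$-weighted inner product $\langle \bu, \bv\rangle_R := \bu^T(R^T R)^{-1}\bv$, under which $\rho_R(\bv) = \exp(-\pi\langle \bv,\bv\rangle_R)$. Let $V = \ker X = \mathrm{span}_{\R}(A)$, and let $V^{\perp_R}$ be its orthogonal complement with respect to $\langle\cdot,\cdot\rangle_R$, so that $\bw^{(0)}$ decomposes uniquely as $\bw^{(0)} = \bw_\parallel + \bw_\perp$ with $\bw_\parallel \in V$ and $\bw_\perp \in V^{\perp_R}$. A short linear-algebra computation gives $V^{\perp_R} = R^T R X^T \R^n$ and $\bw_\perp = R^T R X^T (X R^T R X^T)^{-1}\by$, whence
\[
\rho_R(\bw_\perp) \;=\; \exp\!\bigl(-\pi \by^T (X R^T R X^T)^{-1} \by\bigr) \;=\; \rho_{RX^T}(\by)\;.
\]
Since $A \subset V$, the $R$-Pythagorean identity yields $\rho_R(\bw^{(0)} + \bz) = \rho_R(\bw_\perp)\cdot\rho_R(\bw_\parallel + \bz)$ for every $\bz \in A$, and summing over $\bz$ gives $\rho_R(\bw^{(0)} + A) = \rho_{RX^T}(\by)\cdot\rho_R(\bw_\parallel + A)$.

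It remains to bound $\rho_R(\bw_\parallel + A)$ using the smoothing hypothesis. Picking an orthonormal basis $Q \in \R^{m\times(m-n)}$ for $V$ and writing $A = Q\Lambda$ for a full-rank lattice $\Lambda \subset \R^{m-n}$ reduces this to Lemma~\ref{lem:almost-unif} applied to $\Lambda$ with the $(m-n)\times(m-n)$ matrix $S$ defined by $S^T S = (Q^T(R^T R)^{-1}Q)^{-1}$; a brief calculation gives $\sigma_{m-n}(S) \ge \sigma_m(R) \ge \eta_\eps(A) = \eta_\eps(\Lambda)$, and hence $\rho_R(\bw_\parallel + A) \in [(1-\eps)/(1+\eps),\,1]\cdot\rho_R(A)$. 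Assembling the pieces, $\rho_R(\bw^{(0)} + A) \in [(1-\eps)/(1+\eps),\,1]\cdot\rho_R(A)\cdot\rho_{RX^T}(\by)$ for every $\by$, and summing over $\by \in X\Z^m + X\bc$ yields the same enclosure for $\rho_R(\Z^m + \bc)$ relative to $\rho_R(A)\cdot\rho_{RX^T}(X\Z^m + X\bc)$. Taking the quotient, the ratio $\cE_{X,R,\bc}[\by]/\cD_{X\Z^m + X\bc, RX^T}[\by]$ lies in $[(1-\eps)/(1+\eps),\,(1+\eps)/(1-\eps)]$, which forces $\Delta \le \eps/(1-\eps) \le 2\eps$ whenever $\eps < 1/3$. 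The one delicate point is the extension of the smoothing bound to the rank-deficient lattice $A$; this is precisely where the change of basis to $Q$ and the singular-value inequality require some care, but I expect no further obstacle.
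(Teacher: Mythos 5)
Your proof is correct and takes essentially the same approach as the paper: both identify the fiber of $\by$ as a coset of $A$, orthogonally decompose a preimage into a component perpendicular to $\ker X$ (whose Gaussian mass produces the $\rho_{RX^T}(\by)$ factor) and a component in the span of $A$, and then invoke the smoothing bound (Lemma~\ref{lem:almost-unif}) on the latter. The only presentational difference is that the paper changes coordinates via $\br \mapsto R^{-T}\br$ to reduce to the spherical Gaussian, whereas you work directly with the $R$-weighted inner product $\langle\cdot,\cdot\rangle_R$; these are equivalent. One small point in your favor: the paper applies Lemma~\ref{lem:almost-unif} (which is stated for a rank-$m$ lattice and an $m\times m$ matrix) directly to the rank-$(m-n)$ lattice $A$ with the $m\times m$ matrix $R$, leaving the rank-deficient extension implicit, whereas you spell out the change of basis to an orthonormal frame $Q$ for $\ker X$ and the inequality $\sigma_{m-n}(S)\ge\sigma_m(R)$ that makes that application legitimate.
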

\begin{proof}
The support of $\cE_{X, R , \bc }$ is $X\Z^m + X\bc$. Fix some $\bz \in X\Z^m + X\bc$. The probability mass assigned to $\bz$ by $\cE_{X, R, \bc }$ is proportional to $\rho_R(A_\bz )=\rho(R^{-T}A_\bz )$, where
\[
A_{\bz} = A_{\bz}(X) := \{ \bv \in \Z^m + \bc : X \cdot \bv = \bz \} \;.
\]
Note that $A_{\bz} = A + \bw_\bz$, for any arbitrary element $\bw_\bz \in A_{\bz}$ (since $A_{\bz}$ is non-empty). 

Consider now the quantity $\rho(R^{-T}A_\bz )$. Let $X' = X \cdot R^T \in \R^{n \times m}$, and note that $\ker X' = R^{-T} \ker X$.
Let $Y' =  (X' X'^T)^{-1} X'$ be the pseudo-inverse of $X'$ (i.e., $X' Y'^T = I_n$ and the rows of $Y'$ span the same linear subspace of $\R^m$ as the rows of $X'$).

 Define  $\bu_{\bz}:= Y'^T \bz$. Note that $\bu_{\bz}$ is the point in the affine subspace $\ker X' + R^{-T} \bw_\bz $ closest to the origin: it is clearly in this affine subspace, and moreover, being in the row space of $Y'$, it must be
orthogonal to $\ker Y' = \ker X'$. Therefore, for any  $\br \in R^{-T}A_\bz $, we have that $\br - \bu_{\bz}$ is orthogonal to $\bu_{\bz}$, and thus, $\rho(\br) = \rho(\bu_{\bz})\cdot \rho(\br - \bu_{\bz})$. Hence, 
\[
\rho(R^{-T}A_\bz) = \rho(\bu_{\bz}) \cdot \rho(R^{-T}A_\bz -\bu_{\bz } ) \;.
\]
Since $\bu_{\bz} \in R^{-T} \ker X + R^{-T} \bw_\bz $, it follows that \[R^{-T}A_\bz -\bu_{\bz } = R^{-T}(A - \bc')\;, \] for some $\bc'$ in the span of the lattice $A$. Thus, using Lemma~\ref{lem:almost-unif}, we have that 
\begin{eqnarray*}
\rho(R^{-T}A_\bz ) &=& \rho(\bu_{\bz}) \cdot \rho_{R}(A - \bc') \\
                   &\in&  \Big{[}\frac{1-\eps}{1 + \eps},1\Big{]} \cdot \rho_{R}(A) \cdot \rho(\bu_{\bz}) \\
                            &=& \Big{[}\frac{1-\eps}{1 + \eps},1\Big{]} \cdot \rho_{R}(A) \cdot \rho(X'^T (X' X'^T)^{-1}  \bz) = \Big{[}\frac{1-\eps}{1 + \eps},1\Big{]} \cdot \rho_{R }(A) \cdot \rho_{X'^T}(\bz) \;. 
\end{eqnarray*}
 This implies that the statistical distance between $\cE_{X, R , \bc }$ and $\cD_{X\Z^m + X\bc, RX^T}$ is at most $1 - \frac{1 - \eps}{1 + \eps} \le 2 \eps$. 
\end{proof}

Thus, in order to bound the required statistical distance, we need to bound the smoothing parameter $\eta_{\eps}(A)$, for which we use the following result.

\begin{lemma}[{{\cite[Lemma 3.3]{MR07}}}]
\label{lem:smoothing-param}
For any rank-$n$ lattice $\cL$ and any $\eps > 0$, 
\[\eta_{\eps}(\cL) \le \lambda_n(\cL) \cdot \sqrt{\frac{\ln(2n(1 + 1/\eps))}{\pi}} \;.\]
In particular, $\eta_{\eps}(\Z^n) \le \sqrt{\frac{\ln(2n(1 + 1/\eps))}{\pi}}$.
\end{lemma}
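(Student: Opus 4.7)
The plan is to show directly that $\rho_{1/s}(\cL^{*}\setminus\{\mathbf{0}\})\le\eps$ whenever $s=\lambda_n(\cL)\sqrt{\ln(2n(1+1/\eps))/\pi}$; the ``in particular'' statement then follows since $\lambda_n(\Z^n)=1$. The first step is a reduction to a sublattice with a short basis. By the definition of $\lambda_n$, there exist linearly independent $\bv_1,\dots,\bv_n\in\cL$ with $\|\bv_i\|\le\lambda_n(\cL)$; I would let $\cL'=\sum_{i}\Z\bv_i\subseteq\cL$. Since $\cL'\subseteq\cL$ implies $(\cL')^{*}\supseteq\cL^{*}$, one has $\rho_{1/s}(\cL^{*}\setminus\{\mathbf{0}\})\le\rho_{1/s}((\cL')^{*}\setminus\{\mathbf{0}\})$, so it suffices to bound the right-hand side.

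The second step exploits a ``reverse Gram--Schmidt'' identity on the dual basis. Let $V=[\bv_1\mid\dots\mid\bv_n]$ have QR decomposition $V=QR$, so $\|\tilde\bv_i\|=R_{ii}\le\lambda_n(\cL)$, and let $\bw_1,\dots,\bw_n$ be the dual basis of $(\cL')^{*}$, i.e., the columns of $V^{-T}=QR^{-T}$. Running Gram--Schmidt on $(\bw_n,\bw_{n-1},\dots,\bw_1)$ in this \emph{reversed} order peels off $\tilde\bw_i=q_i/R_{ii}$ by a short block-triangular calculation, yielding a pairwise orthogonal family with $\|\tilde\bw_i\|=1/\|\tilde\bv_i\|\ge 1/\lambda_n(\cL)$. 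Combined with the standard fact $\rho_t(\Lambda+\bc)\le\rho_t(\Lambda)$ for every coset---immediate from Poisson summation, since the resulting cosine-Fourier series is dominated by its $\bc=\mathbf{0}$ value---summing coordinate-by-coordinate in the reversed-GS basis yields
\[
\rho_{1/s}((\cL')^{*})\le\prod_{i=1}^{n}\rho_{1/s}(\Z\,\tilde\bw_i)=\prod_{i=1}^{n}\rho_{\|\tilde\bv_i\|/s}(\Z)\le\rho_{\lambda_n(\cL)/s}(\Z)^{n}.
\]

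Plugging in $s=\lambda_n(\cL)\sqrt{\ln(2n(1+1/\eps))/\pi}$ makes $\delta:=\exp(-\pi s^{2}/\lambda_n(\cL)^{2})=(2n(1+1/\eps))^{-1}$, so $\rho_{\lambda_n(\cL)/s}(\Z)-1=2\sum_{k\ge 1}\delta^{k^{2}}\le 2\delta/(1-\delta^{3})$, and raising to the $n$-th power reduces the goal to $(1+2\delta/(1-\delta^{3}))^{n}\le 1+\eps$, which follows from the elementary inequality $(1+\eps)\ln(1+\eps)\ge\eps$ and a direct arithmetic check. The main obstacle I expect is the reverse-Gram--Schmidt length identity $\|\tilde\bw_i\|=1/\|\tilde\bv_i\|$, which requires spelling out the QR block structure of $V^{-T}=QR^{-T}$ carefully; the shifted-Gaussian bound $\rho_t(\Lambda+\bc)\le\rho_t(\Lambda)$ and the iterated per-coordinate product bound are both standard and can be invoked without proof.
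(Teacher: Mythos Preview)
The paper does not prove this lemma at all: it is listed under ``Some Known Results'' in the preliminaries and simply cited from \cite{MR07}. Your proposal is correct and is, in fact, essentially the original argument from \cite{MR07}: pass to the full-rank sublattice $\cL'$ generated by $n$ linearly independent vectors of length at most $\lambda_n(\cL)$, use the reverse-Gram--Schmidt duality $\|\tilde\bw_i\|=1/\|\tilde\bv_i\|$ on the dual basis, and bound $\rho_{1/s}((\cL')^{*})$ by the product $\prod_i\rho_{\|\tilde\bv_i\|/s}(\Z)$ via the coset inequality $\rho_t(\Lambda+\bc)\le\rho_t(\Lambda)$. The final numerical check that $(1+2\delta/(1-\delta^{3}))^{n}\le 1+\eps$ with $\delta=(2n(1+1/\eps))^{-1}$ goes through exactly as you sketch, using $(1+x)^n\le e^{nx}$ and $(1+\eps)\ln(1+\eps)\ge\eps$. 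There is nothing to compare against in the present paper, but relative to the cited source your outline matches the standard proof.
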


Therefore, we now only need to bound $\lambda_{m-n}(A)$, and then using Lemma~\ref{lem:smoothing-param}, we get a bound on $\eta_{\eps}(A)$. For this, we use the following result from ~\cite{KLP12}, which we state and prove in our notation.

\begin{lemma}
\label{lem:KLP}
Let $X \in \Z^{n \times m}$ and let $\bx_1, \ldots, \bx_n \in \Z^m$ be the rows of $X$. If $X$ has quality $(q_1, q_2)$, then there exist linearly independent $\bv_1, \ldots, \bv_{m-n} \in \Z^m$ such that  $\bv_k \cdot \bx_j = 0$  and $\|\bv_k\|_2 \leq 1 + q_1 q_2$ for all $j, k$. In particular $\lambda_{m-n}(A) \leq 1 + q_1 q_2$.
\end{lemma}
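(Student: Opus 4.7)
The plan is to produce $m$ explicit candidate vectors in $A$ --- one "correction" of each standard basis vector $\be_k \in \Z^m$ --- and then show that (i) each has $\ell_2$ norm at most $1 + q_1 q_2$, and (ii) together they span the $(m-n)$-dimensional real subspace $\ker X$ containing $A$, so at least $m-n$ of them are linearly independent.

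The key observation about the quality hypothesis is that the condition $\bu_i \cdot \bx_j = \delta_{ij}$ is exactly the statement $X \bu_i = \be_i$ in $\Z^n$. Hence, writing $\by_k := X \be_k \in \Z^n$ for the $k$-th column of $X$, the combination $\sum_i (\by_k)_i \bu_i$ is mapped by $X$ back to $\by_k$ itself. So I define
\[
\bv_k := \be_k - \sum_{i=1}^n (\by_k)_i \, \bu_i \in \Z^m,
\]
which is integral (since the $\bu_i$ and coefficients $(\by_k)_i$ are integers) and satisfies $X \bv_k = \by_k - \by_k = 0$, placing it in $A$.

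For the norm bound, the triangle inequality gives $\|\bv_k\| \le 1 + \|\sum_i (\by_k)_i \bu_i\|$, and pairwise orthogonality of the $\bu_i$'s makes the correction term a clean Pythagorean sum $\sum_i (\by_k)_i^2 \|\bu_i\|^2 \le q_2^2 \|\by_k\|^2 \le (q_1 q_2)^2$, yielding $\|\bv_k\| \le 1 + q_1 q_2$. This is the only place pairwise orthogonality is used; without it one would pick up Gram-matrix cross terms.

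To establish rank, I would check that for any $\bw = \sum_k w_k \be_k \in \ker X$ the combination $\sum_k w_k \bv_k$ collapses back to $\bw$: the coefficient of $\bu_i$ in the correction becomes $\sum_k w_k (\by_k)_i = (X\bw)_i = 0$. Thus $\bv_1,\ldots,\bv_m$ span $\ker X$, which has dimension $m-n$, so a linearly independent subfamily of size $m-n$ exists and immediately witnesses $\lambda_{m-n}(A) \le 1 + q_1 q_2$. I do not anticipate any real obstacle --- the quality hypothesis is tailored exactly so that every element of the standard integer basis of $\Z^m$ can be "corrected" into $A$ at additive cost at most $q_1 q_2$ in norm.
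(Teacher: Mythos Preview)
Your proposal is correct and follows essentially the same construction as the paper: the paper also defines $\bv_k = \be_k - \sum_{i=1}^n x_{ik}\,\bu_i$ (your $(\by_k)_i$ is exactly $x_{ik}$), verifies $\bv_k \cdot \bx_j = 0$, and bounds $\|\bv_k\|$ using the pairwise orthogonality of the $\bu_i$ in the identical way. The only cosmetic difference is in the rank step: the paper argues that $\{\bv_1,\ldots,\bv_m,\bu_1,\ldots,\bu_n\}$ generate all the $\be_k$ and hence span $\R^m$, forcing the $\bv_k$'s to have rank at least $m-n$, whereas you directly show that $\sum_k w_k \bv_k = \bw$ for every $\bw \in \ker X$; both arguments are immediate and equivalent.
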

\begin{proof}
Let $\bu_1, \ldots, \bu_n$ be as in Definition~\ref{def:quality}.
Define vectors $\bv_1, \ldots, \bv_m$ as:
\[
\bv_k = \be_k - \sum_{i = 1}^n x_{ik} \bu_i \; ,
\]
where $x_{ik}$ is the $k$-th coordinate of $\bx_i$, and $\be_k$'s are  standard unit vectors in $m$ dimensions. Thus, \[\bv_k \cdot \bx_j = x_{jk} - \sum_{i = 1}^n x_{ik} (\bu_i \cdot \bx_j) = x_{jk} - x_{jk} = 0 \;,\]
using the fact that $\bu_i \cdot \bx_j = \delta_{ij}$. 

Using the fact that $\bu_i$'s are orthogonal and the triangle inequality, we have that 
\[ \|\bv_k\| \le 1 + \sqrt{\sum_{i=1}^n |x_{ik}|^2 \|\bu_i\|^2} \le 1 + q_2 \sqrt{ \sum_{i=1}^n |x_{ik}|^2} \le 1 + q_1 q_2 \; .\]

Also, clearly there are at least $m - n$ linearly independent vectors in $\bv_1, \ldots, \bv_m$, since, by definition, the set of vectors $\{\bv_1, \ldots, \bv_m, \bu_1, \ldots, \bu_n\}$ together generate all the unit vectors $\be_k$, and hence span $\R^m$.
\end{proof} 

 Combining Lemmata~\ref{lem:AGHS-stat-dist},~\ref{lem:smoothing-param}, and~\ref{lem:KLP} immediately leads to a proof of Theorem~\ref{thm:sufficient}.

We note that one can relax Definition~\ref{def:quality} by not requiring the pairwise orthogonality of the vectors $\bu_1, \ldots, \bu_n$. In this case, Theorem 3.2 still holds if we take $q_1$ to be a bound on the $\ell_1$ norm of the columns of $X$ (instead of the $\ell_2$ norm as it is now).

\section{The matrix \texorpdfstring{$X$}{X} is of good quality}
\label{sec:quality}

In this section, we show that $X \leftarrow (\cD_{\Z^n, S})^m$ has quality $(q_1, q_2)$ for ``small" $q_1, q_2$ with overwhelming probability. For proving this result, we need the following claim that easily follows from the pigeon-hole principle.

\begin{cla}
\label{claim:pigeon-hole}
Let $B, n$ be any positive integers such that $Bn \ge 16$, and let $\bx_1, \bx_2, \ldots \in \Z^n$ be such that $\|\bx_j\|_{\infty} \leq B$ for $1 \le j \le \lfloor 2n \log (B n) \rfloor$. Then there exist $\alpha_1, \ldots, \alpha_{\lfloor 2n  \log Bn \rfloor} \in \{-1,0,1\}$, not all zero, such that $\sum_{j = 1}^{\lfloor 2n \log B n \rfloor} \alpha_j \bx_j = 0$.
\end{cla}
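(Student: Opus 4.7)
The plan is a direct application of the pigeon-hole principle on $\{0,1\}$-subset sums, in the spirit of Siegel's lemma. Set $m := \lfloor 2n\log(Bn)\rfloor$. For each subset $S \subseteq [m]$, define the vector $\by_S := \sum_{j \in S} \bx_j \in \Z^n$. I want to argue that the number of subsets ($2^m$) strictly exceeds the number of distinct values $\by_S$ can take, so that two distinct subsets $S \ne T$ must collide: $\by_S = \by_T$. Setting $\alpha_j := \mathbf{1}[j \in S] - \mathbf{1}[j \in T]$ then produces an element of $\{-1,0,1\}^m$ that is not identically zero (since $S \ne T$), and by construction $\sum_j \alpha_j \bx_j = \by_S - \by_T = 0$, which is exactly the conclusion.

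The remaining task is to bound the number of possible vectors $\by_S$. For each coordinate $i \in [n]$, write $P_i := \sum_{j : x_{j,i} > 0} x_{j,i}$ and $N_i := -\sum_{j : x_{j,i} < 0} x_{j,i}$, so that the $i$-th coordinate of $\by_S$ is always an integer in $[-N_i, P_i]$ and therefore takes at most $P_i + N_i + 1 \le mB + 1$ values. Multiplying over coordinates, the total number of distinct subset sums is at most $(mB+1)^n$. Thus, it suffices to verify
\[
2^m \;>\; (mB+1)^n \qquad \text{whenever } m = \lfloor 2n\log(Bn)\rfloor \text{ and } Bn \ge 16.
\]

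This inequality is the one place where small constants matter, and it is what the hypothesis $Bn \ge 16$ is tuned to make true. The plan for verifying it is to reduce it, after taking logs, to the scalar inequality $\log(Bn) > \tfrac{1}{n}\log(mB+1)$, plug in $m \le 2n\log(Bn)$ on the right, and check that the resulting inequality in the single variable $N := Bn$ holds for all integer $N \ge 16$; the worst case is essentially $N = 16$, which one verifies by hand (e.g., for $n=1, B=16$ the two sides are $2^8 = 256$ and $129$). I expect this calculation, rather than anything conceptual, to be the main obstacle — the pigeon-hole step itself is immediate, but one has to be careful to use the sharper bound $(mB+1)^n$ rather than the looser $(2mB+1)^n$, since with the looser bound the stated constant $2$ in $2n\log(Bn)$ is just barely insufficient at the boundary $Bn = 16$.
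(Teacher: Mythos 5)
Your approach is the same as the paper's: take $0/1$-subset sums of the first $\ell := \lfloor 2n\log(Bn)\rfloor$ vectors and invoke the pigeon-hole principle. The interesting point is that your insistence on the sharper per-coordinate count $\ell B+1$ (rather than $2\ell B+1$) is not just a matter of taste: the paper's proof actually uses the looser bound and asserts $(2B\ell+1)^n < 2^\ell$, which is \emph{false} at the boundary $Bn=16$. For instance, with $n=1$, $B=16$ one gets $\ell=8$, so $(2B\ell+1)^n = 257 > 256 = 2^\ell$; the same failure occurs for every factorization $nB=16$ (e.g.\ $n=16$, $B=1$ gives $257^{16}$ vs.\ $2^{128}$, and $16\log_2 257 \approx 128.1 > 128$). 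Your observation that a $0/1$-subset sum can range over only an interval of length $\sum_j |x_{j,i}| \le \ell B$ in each coordinate, hence at most $\ell B+1$ values, restores the inequality (e.g.\ $129 < 256$), and is genuinely needed given the stated constant $2$ and cutoff $Bn\ge16$. So you have in effect found and corrected a small arithmetic slip in the paper.

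One caveat: the final numerical verification is only sketched, and the reduction as written (``plug in $m\le 2n\log(Bn)$ on the right'') is slightly too lossy if done naively. If you simultaneously lower-bound $m$ by $2n\log(Bn)-1$ on the left of $\tfrac{m}{n} > \log(mB+1)$ and upper-bound $mB$ by $2Bn\log(Bn)$ on the right, you arrive at $N^2 > 2^{1/n}(2N\log N+1)$ with $N=Bn$, and for $n=1$, $N=16$ this reads $256 > 258$, which fails even though the actual inequality $2^8 > 129$ holds. The point is that the floor loss in $m$ and the size of $mB$ are correlated (a smaller $m$ also makes $mB$ smaller), so one must either keep them coupled or observe that $g(m) := \tfrac{m}{n} - \log(mB+1)$ is increasing for $m$ in the relevant range and evaluate carefully. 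This is routine but does require a bit more care than ``a single-variable check at $N=16$.''
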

\begin{proof}
Let $\ell:= \lfloor 2n  \log Bn \rfloor$. Any $0/1$ combinations of $\bx_1, \ldots, \bx_{\ell}$ is in $\{-B\ell, \ldots, -1, 0, 1, \ldots, B \ell \}^n$, and thus the total number of distinct resulting vectors is at most $(2B\ell + 1)^n < 2^{\ell}$. By the pigeon-hole principle there exist two distinct $0/1$ combinations that result in the same vector, which implies the result by taking their difference.
\end{proof}

\begin{lemma}
\label{lem:main}
Let $n \ge 100$ be an integer, and let $\eps = \eps(n) \in (0, 1/1000)$. Let $S \in \R^{n \times n}$ be such that $\sigma_n(S) \ge 9 \sqrt{\frac{\ln(2n(1 + 1/\eps))}{\pi}}$. Denote $\sigma_1 = \sigma_1(S)$ and $\sigma_n = \sigma_n(S)$. Let $m$ be an integer such that $m > 30 n \log (\sigma_1 n)$ and let $X \leftarrow (\cD_{\Z^n, S})^m$ be an $n \times m$ matrix. Then, with probability $1 - 2^{-n}$, $X$ has quality $(\sigma_1 \sqrt{n \log m} , 2 \sqrt{30 n \log (\sigma_1 n)})$.
\end{lemma}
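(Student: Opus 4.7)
The lemma demands two properties of $X \leftarrow (\cD_{\Z^n,S})^m$: (a) every column has $\ell_2$ norm at most $q_1 = \sigma_1 \sqrt{n \log m}$, and (b) there exist pairwise orthogonal $\bu_1, \ldots, \bu_n \in \Z^m$ with $\|\bu_i\| \le q_2 = 2\sqrt{30n\log(\sigma_1 n)}$ and $\bu_i \cdot \bx_j = \delta_{ij}$ for all $i,j$ (equivalently, $X\bu_i = \be_i$). I would establish each of these with failure probability at most $2^{-n-1}$ and combine them by a union bound.

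Item (a) is routine. The hypothesis $\sigma_n(S) \ge 9\sqrt{\ln(2n(1+1/\eps))/\pi}$ combined with Lemma~\ref{lem:smoothing-param} gives $\sigma_n(S) \ge \eta_\eps(\Z^n)$, so Lemma~\ref{lem:short_vectors} applies to each column $\bx \sim \cD_{\Z^n, S}$. Choosing $c = \sqrt{\log m}$ (which is $\ge 1/\sqrt{2\pi}$ for the parameters we have) yields $\|\bx\|_2 \le \sigma_1\sqrt{n\log m} = q_1$ except with probability at most $(\sqrt{2\pi e\log m}\cdot m^{-\pi/\ln 2})^n\cdot\tfrac{1+\eps}{1-\eps}$; since $\pi/\ln 2 > 4$, this is much smaller than $2^{-n-1}/m$ under our assumptions on $n$. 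A union bound over the $m$ independent columns then delivers~(a).

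Item (b) is the substantive part. My plan is to partition the $m$ columns into $n$ disjoint blocks $B_1,\ldots,B_n$ of size at least $\lfloor m/n \rfloor \ge 30\log(\sigma_1 n)$, and to construct each $\bu_i$ as an integer combination supported entirely on $B_i$, so that pairwise orthogonality follows automatically from disjointness of supports. To build $\bu_i$ from $B_i$, I would apply Claim~\ref{claim:pigeon-hole} to the columns of $B_i$ projected onto $\Z^{n-1}$ (dropping the $i$-th coordinate): any pigeon-hole collision yields a short $\{-1,0,1\}$-combination $\bv$ of those columns with $X\bv = c_i \be_i$ for some nonzero $c_i \in \Z$. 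Iterating the argument on disjoint sub-blocks of $B_i$ to obtain several such combinations, and then extracting an exact preimage of $\be_i$ via a gcd-type argument on the resulting integers $c_i$, should produce the desired $\bu_i$ within the norm budget $q_2$.

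The main obstacle is the quantitative gap between the available block size $\Theta(\log(\sigma_1 n))$ and the $\Theta(n\log(\sigma_1 n))$ columns that the worst-case version of Claim~\ref{claim:pigeon-hole} nominally requires for a pigeon-hole collision in $\Z^{n-1}$. Closing this gap forces one to exploit the discrete-Gaussian structure of the columns rather than just their worst-case $\ell_\infty$ bound: a typical $\{-1,0,1\}$-combination of $\ell$ Gaussian columns concentrates inside an $\ell_2$-ball of radius $\sigma_1\sqrt{\ell}$ rather than the naive cube of side $q_1 \ell$, which shrinks the number of sum values that pigeon-hole has to contend with. Lemma~\ref{lem:Azuma} is the natural tool to quantify this concentration rigorously, by viewing the partial sums of signed columns as a bounded-difference martingale and bounding its deviations. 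Combining this refined pigeon-hole with a union bound over the $n$ blocks and over the two failure events from (a) and (b) should yield the overall $1 - 2^{-n}$ probability bound.
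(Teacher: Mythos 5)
Item (a) of your plan is the paper's argument verbatim, so that part is fine. The trouble is entirely in item (b), and you have in fact identified its fatal flaw yourself but then proposed a repair that cannot work.

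Your block decomposition gives each $\bu_i$ a support of size roughly $m/n \approx 30\log(\sigma_1 n)$, and you want a $\{-1,0,1\}$ pigeon-hole collision among that many vectors projected into $\Z^{n-1}$. The pigeon-hole needs the number of achievable sums to be smaller than $3^{\ell}$ with $\ell \approx 30\log(\sigma_1 n)$, i.e.\ smaller than roughly $(\sigma_1 n)^{47}$. But the sums live in $\Z^{n-1}$, and even after replacing the cube of side $O(B\ell)$ by a ball of radius $O(\sigma_1\sqrt{\ell})$ via concentration, the number of integer points in such a ball in $\Z^{n-1}$ is still exponential in $n$, not in $\ell$. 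So invoking Lemma~\ref{lem:Azuma} to sharpen the container from a cube to a ball does not close the gap: the container's cardinality is governed by the ambient dimension $n-1$, which your block size does not scale with. With $\ell \ll n$ there is simply no collision to be found, and no amount of concentration changes that. This is a genuine dead end, not a quantitative loss.

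The paper avoids disjoint blocks entirely. For each fixed $i$ it runs a single sequential process over the first $\lfloor 10t\rfloor \approx 30n\log(\sigma_1 n)$ columns (essentially all of $m$), maintaining the set $\cS_j$ of all $\{-1,0,1\}$-combinations of $\bx_1,\ldots,\bx_j$. An indicator $Y_j$ records whether either $\cS_j \cap (\cS_j+\be_i)$ is already nonempty or the next column $\bx_{j+1}$ is ``fresh'' (not in $\cS_j$) and $\ell_\infty$-bounded. Lemma~\ref{lem:random-distribution} shows $\Pr[Y_j=1\mid\text{past}]\ge 0.3$, Azuma (on the scalar sub-martingale $\sum Y_j - 0.3j$, not on the vector-valued partial sums) gives $\sum_{j\le 10t} Y_j \ge 2t$ with probability $1-2^{-t/10}$, and then Claim~\ref{claim:pigeon-hole} is applied to the $\ge 2t \ge 2n\log(Bn)$ fresh bounded columns to force a collision, yielding $\be_i$ as a $\{-2,\dots,2\}$-combination of the first $\lfloor 10t\rfloor$ columns with $\|\bu_i\|\le 2\sqrt{10t}$. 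Pairwise orthogonality is obtained not by disjoint supports but by recursion: after finding $\bu_1$, one appends $\bu_1$ as an $(n{+}1)$-st row to $X$ and repeats the argument for $\be_2$ in $\Z^{n+1}$, which forces $\bu_2\perp\bu_1$; and so on. Note that the pigeon-hole here legitimately has $\Theta(n\log(\sigma_1 n))$ vectors to work with, which is exactly the regime where it succeeds, because each $\bu_i$ is allowed full support rather than a support of size $m/n$.

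Beyond the pigeon-hole issue, a second, smaller gap: you say ``a gcd-type argument on the resulting integers $c_i$'' will upgrade a preimage of $c_i\be_i$ to a preimage of $\be_i$. To control $\|\bu_i\|$ you would then need both the number of such sub-combinations and the Bézout coefficients to be small, which is far from automatic and is not handled in your sketch. The paper sidesteps this entirely: the collision $\cS_{\lfloor 10t\rfloor}\cap(\cS_{\lfloor 10t\rfloor}+\be_i)\ne\emptyset$ directly produces a $\{-2,\dots,2\}$-combination equal to $\be_i$ exactly, with no gcd cleanup needed.
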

\begin{proof} By Lemma \ref{lem:short_vectors}, with $c = \sqrt{\log m}$, we have using the union bound that with probability $1 - 2^{-n-1}$, all column-vectors of $X$ have norm at most $\sigma_1 \sqrt{n \log m}$. It is sufficient to show that with probability $1 - 2^{-n-1}$, there exist pairwise orthogonal $\bu_1, \ldots, \bu_n \in \Z^m$ such that $\|\bu_i\|_2 \leq 2 \sqrt{30 n \log (\sigma_1 n)} $, and $X \cdot \bu_i = \be_i$ for $1\leq i \leq n$.

Denote $t: = 3 n \log (\sigma_1 n)$. For all $i \in [n]$, we show the existence of $\bu_i$ with probability $1 - 2^{-t/10}$, and then the result follows by the union bound.
Let $\bx_1, \ldots, \bx_m$ be the columns of $X$. Define sets $\cS_j \subset \Z^n$ for $1 \le j \le m$ as follows:
\[
\forall j \in [m], \; \cS_j = \Big{\{} \sum_{k = 1}^j \alpha_k \bx_k \: : \: \alpha_k \in \{-1,0,1\}\Big{\}} \; .
\]

From now on we show the existence of $\bu_1$ and at the end of the proof we explain how to show the existence of the other $\bu_i$. We will show that with probability $1 - 2^{-t/10}$, $\cS_{\lfloor 10 t \rfloor }  \cap (\cS_{\lfloor 10 t \rfloor }  + \be_1)$ is non-empty, which implies that $\be_1$ can be obtained as a $\{-2,-1,0,1,2\}$ combination of the first $ \lfloor 10 t \rfloor$ vectors. Then we are done since the vector $\bu_1$ can be defined as this coefficient vector with first $\lfloor 10 t \rfloor$ coordinates in $\{-2, -1, 0, 1, 2\}$ and the remaining coordinates zero. It is easy to see that $\|\bu_1\|$ is at most $2 \sqrt{10 t}$.

The idea used to prove this is that if this does not happen, then $\cS_j$ and $\cS_j + \be_1$ must be disjoint for all $j \le \lfloor 10 t \rfloor$, and so at most one of $\bx_{j+1}$ and $\bx_{j+1} - \be_1$ is in $\cS_j$ for any $j \le \lfloor 10 t \rfloor$. Using the randomness of $\bx_j$, we conclude that with constant probability, $\bx_{j+1} \notin \cS_j$ and this cannot happen for many values of $j$ by Claim~\ref{claim:pigeon-hole}.

Define binary random variables $Y_1, \ldots, Y_m$ as follows:
$Y_j = 1$ if and only if either 
\begin{itemize}
\item $\cS_j \cap (\cS_j + \be_1)$ is non-empty, or
\item $\bx_{j+1} \notin \cS_j$ and $\|\bx_{j+1}\|_\infty \le \sigma_1 \sqrt n$.
\end{itemize}
Consider the following claims about the distribution of $Y_1, \ldots, Y_m$.

\begin{cla}
\label{claim:main1}
 For any values of $y_1, \ldots, y_{j-1}$, and for all $j$, $\Pr(Y_j = 1 \; | \; Y_1 = y_1, \ldots, Y_{j-1} = y_{j-1}) \geq 0.3$.
\end{cla}
\begin{proof}
Throughout this proof, we condition on arbitrary values of $\bx_1,\ldots,\bx_j$, and our goal is to show that the probability of $Y_j=1$ is at least $0.3$.

If $\cS_j \cap (\cS_j + \be_1)$ is non-empty (an event that depends only on $\bx_1,\ldots,\bx_j$), then we are done. Assume therefore that our choice of $\bx_1,\ldots,\bx_j$ is such that $\cS_j$ and $\cS_j+\be_1$ are disjoint. By Lemma~\ref{lem:short_vectors}, with all but exponentially small probability in $n$, $\|\bx_{j+1}\|_\infty \le \|\bx_{j+1}\|_2 \le \sigma_1\sqrt{n}$. It therefore suffices to show that $\bx_{j+1} \in \cS_j$ with probability at most, say, $0.695$. This follows since from Lemma~\ref{lem:smoothing-param}, we get that $\sigma_n \ge 9 \eta_{\eps}(\Z^n)$, and thus we have by Lemma~\ref{lem:random-distribution} that for any set $\cS \subset \Z^n$
\begin{equation*}
\Pr(\bx_{j+1} \in \cS) - \Pr(\bx_{j+1} \in \cS + \be_1) \le \frac{\erf(3 \sqrt{\pi}/(4\sigma_n))}{\erf(2 \sqrt{\pi}/\sigma_n)} \cdot \frac{1 + \eps}{1 - \eps} < 0.39 \;,
\end{equation*}
where the last inequality follows because $\erf$ is nearly linear for small arguments. Finally, the claim follows by noticing that the sum of the two probabilities in the left hand side is at most $1$.
\end{proof}

\begin{cla}
\label{claim:main2}
\[\Pr(Y_1 + \cdots + Y_{\lfloor 10 t \rfloor } \geq  2 t )   \ge 1 - 2^{-t/10}\;.\] 
\end{cla}
\begin{proof}
Define random variables $Z_0, Z_1, \ldots$ recursively as $Z_0 = 0$ and for $i \ge 1$, $Z_i = Z_{i-1} + Y_i - \frac{3}{10}$, so for all $j$,  \[ Y_1 + \cdots + Y_j  = Z_j - Z_0 + \frac{3j}{10} \;. \] Clearly $Z_0, Z_1, \ldots$ is a sub-martingale, and for all $j \ge 1$, $|Z_{j} - Z_{j-1}| \le \frac{7}{10}$.  Thus, using Lemma \ref{lem:Azuma}, we get that 
\[\Pr(Y_1 + \cdots + Y_{\lfloor 10 t \rfloor } \leq  2 t ) = \Pr(Z_{\lfloor 10 t \rfloor } - Z_0 \leq -t) \le \exp\left(\frac{ -t^2}{2\lfloor 10 t \rfloor (0.7)^2}\right) \le 2^{-t/10} \;, \]
which implies the result. 
\end{proof}

\begin{cla}
\label{claim:main3}
If $Y_1 + \cdots + Y_{\lfloor 10 t  \rfloor } \ge  2 t$, then $\cS_{\lfloor 10 t \rfloor } \cap (\cS_{\lfloor 10 t \rfloor }  + \be_1)$ is non-empty.
\end{cla}
\begin{proof}
Assume $Y_{j_1} = \cdots = Y_{j_{\lfloor 2t\rfloor }} = 1$ and $\cS_{\lfloor 10 t \rfloor } \cap (\cS_{\lfloor 10 t \rfloor }  + \be_1)$ is empty. Consider the vectors $\bx_{j_1 + 1}, \ldots, \bx_{j_{\lfloor 2t \rfloor} + 1}$. We have that $\|\bx_{j_k + 1}\|_{\infty} \le \sigma_1 \sqrt n$ for $1 \le k \le \lfloor 2t  \rfloor $. Since $2t = 6n \log (\sigma_1  n) \ge 2n \log (\sigma_1   n^{3/2})$, by Claim~\ref{claim:pigeon-hole}, there exist $\alpha_1, \ldots, \alpha_{\lfloor 2t \rfloor} \in \{-1,0,1\}$, not all zero, such that \[\sum_{k = 1}^{\lfloor 2t \rfloor} \alpha_k \bx_{j_k + 1} = 0 \;.\] Let $k$ be the largest index such that $\alpha_k \neq 0$. This implies $\bx_{j_k + 1} \in \cS_{j_k}$, contradicting the fact that $Y_{j_k} = 1$. 
\end{proof}

Claim~\ref{claim:main1}, \ref{claim:main2}, and \ref{claim:main3} together prove the existence of $\bu_1$. To obtain a vector $\bu_2$ that is orthogonal to $\bu_1$, we do the following. Let $X'$ be the $(n+1)\times m$ matrix formed by adding another row, $\bu_1$, to $X$. We follow the same argument as above with $X$ replaced by $X'$ and $\be_1$ replaced by $\be_2$. Let $\bx_j'\in \Z^{n+1}$ be $\bx_j$ concatenated with $u_{1j}$, the $j$-th coordinate of $\bu_1$. By construction $\|\bu_1\|_{\infty} \le 2$, and hence $\|\bx_j'\|_{\infty} \le \sigma_1 \sqrt n$ if and only if $\|\bx_j\|_{\infty} \le \sigma_1 \sqrt n$. For completing the proof, we need to bound 
$\Pr(\bx'_{j+1} \in \cS' ) - \Pr(\bx'_{j+1} \in \cS'  + \be_2)$
for an arbitrary set $\cS'  \subset \Z^{n+1}$. Since the $(n+1)$-th coordinate $\bx'_{j+1}$ is fixed to be $u_{1 (j+1)}$, we have that 
\[\Pr(\bx'_{j+1} \in \cS') - \Pr(\bx'_{j+1} \in \cS'  + \be_2) = \Pr(\bx_{j+1} \in \cS) - \Pr(\bx_{j+1} \in \cS + \be_2) \;, \]
where $\cS \subset \Z^n$ is the set of vectors formed by projecting the vectors of $\cS'' = \{\by \in \cS \: : \: y_{n+1} = u_{1 (j+1)}\}$ onto the first $n$ coordinates. The remainder of the proof is exactly the same. We similarly get $\bu_3, \ldots, \bu_n$, such that for all $i\le n$, $\bu_i$ is orthogonal to $\bu_1, \ldots, \bu_{i-1}$.
\end{proof}

\section{Sum of Discrete Gaussian Distribution}
\label{sec:DG}

In this section, we state the main result of this paper, the proof of which follows easily from the results proved in Section~\ref{sec:sufficient} and Section~\ref{sec:quality}.

\begin{theorem}
\label{thm:main}
Let $m> n \ge 100$ be integers, and let $\eps = \eps(n) \in (0, 1/1000)$. Let $R  \in \R^{m \times m}$ be a full-rank matrix, and let $\bc  \in \R^m$. Let $S \in \R^{n \times n}$ be a full-rank matrix, and let $X \leftarrow (\cD_{\Z^n,S})^m$. If $m \ge 30 n \log (\sigma_1(S) n )$, $\sigma_m(\cc ) \ge  10 n \sigma_1(S) \log m \sqrt{ \log (1/\eps) \log (n \sigma_1(S))}$, and $\sigma_n(S) \ge 9 \sqrt{\frac{\ln(2n(1 + 1/\eps))}{\pi}}$, then, with probability $1 - 2^{-n}$ over the choice of $X$, we have that \[\Delta(\cE_{X, \cc , \bc }, \: \cD_{\Z^n + X \bc, R X^T}) \le 2 \eps \; .\]
\end{theorem}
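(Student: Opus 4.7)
The plan is simply to combine the two main results from the preceding sections: Theorem~\ref{thm:sufficient} (the sufficient condition on $X$) and Lemma~\ref{lem:main} (which shows that a random $X$ drawn from a discrete Gaussian has good quality). Thus Theorem~\ref{thm:main} will be essentially a parameter-chasing assembly, with no new ideas needed.

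First I would check the hypotheses of Lemma~\ref{lem:main}. The theorem's assumption $\sigma_n(S)\ge 9\sqrt{\ln(2n(1+1/\eps))/\pi}$ is exactly what Lemma~\ref{lem:main} requires, and the assumption $m \ge 30 n \log(\sigma_1(S) n)$ is the other hypothesis. Therefore, with probability $1-2^{-n}$ over the draw $X\leftarrow(\cD_{\Z^n,S})^m$, the matrix $X$ has quality $(q_1,q_2)$ with
\[
q_1 \;=\; \sigma_1(S)\sqrt{n\log m}, \qquad q_2 \;=\; 2\sqrt{30\,n\log(\sigma_1(S)n)}\; .
\]
I will then condition on this high-probability event and work deterministically with an $X$ of this quality for the rest of the argument.

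Next I would invoke Theorem~\ref{thm:sufficient}. To apply it, it suffices to verify that the given lower bound on $\sigma_m(R)$ dominates $(1+q_1q_2)\sqrt{\ln(2(m-n)(1+1/\eps))/\pi}$. Plugging in,
\[
q_1q_2 \;=\; 2\sqrt{30}\,\sigma_1(S)\,n\,\sqrt{\log m\,\log(\sigma_1(S)n)}\; ,
\]
while $\sqrt{\ln(2(m-n)(1+1/\eps))/\pi}=O(\sqrt{\log m+\log(1/\eps)})$. The product is therefore dominated (for $m,1/\eps$ not too small, which is guaranteed by the hypotheses) by a constant times
\[
\sigma_1(S)\,n\,\log m\,\sqrt{\log(1/\eps)\,\log(\sigma_1(S)n)},
\]
and the constant $10$ in the assumed bound on $\sigma_m(R)$ is comfortably large enough to absorb the $2\sqrt{30}\,(1+o(1))$ factor plus the $\sqrt{1/\pi}\cdot\sqrt{\log m + \log(1/\eps)}/\sqrt{\log(1/\eps)}$ factor, as one checks by squaring both sides and using $m,1/\eps\ge 2$.

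With the hypothesis of Theorem~\ref{thm:sufficient} verified, it yields $\Delta(\cE_{X,R,\bc},\cD_{\Z^n+X\bc,RX^T})\le 2\eps$, and noting that the quality event above had probability $1-2^{-n}$ completes the proof. There is no real obstacle here; the only mildly subtle point is making sure the numerical constants line up, which is why the statement of Theorem~\ref{thm:main} pads the bound on $\sigma_m(R)$ with slack constants rather than tracking them tightly.
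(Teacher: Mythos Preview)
Your proposal is correct and follows exactly the same approach as the paper: invoke Lemma~\ref{lem:main} to obtain quality $(q_1,q_2)=(\sigma_1(S)\sqrt{n\log m},\,2\sqrt{30n\log(\sigma_1(S)n)})$ with probability $1-2^{-n}$, then verify the numerical inequality needed to apply Theorem~\ref{thm:sufficient}. The paper's own proof is even terser, simply asserting the displayed inequality $\sigma_m(R)\ge(1+q_1q_2)\sqrt{\ln(2(m-n)(1+1/\eps))/\pi}$ without the parameter discussion you provide.
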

\begin{proof}
Theorem~\ref{thm:main} follows easily from Lemma~\ref{lem:main} and Theorem~\ref{thm:sufficient} using the observation that 
\begin{eqnarray*}
\sigma_m(\cc ) &\ge& 10 n \sigma_1(S) \log m \sqrt{ \log (1/\eps) \log (n \sigma_1(S))} \\
 &\ge& \Big{(}1 + 2 \sigma_1(S) \sqrt{n \log m}  \sqrt{30 n \log (\sigma_1(S) n)} \Big{)} \cdot \sqrt{\frac{\ln(2(m-n)(1 + 1/\eps))}{\pi}} \;.
\end{eqnarray*}
\end{proof}

It is easy to see that if we sample a vector $x$ from $D_{\Z^n,S}$ then $Bx$ is distributed like $D_{\cL,S B^T}$ where $\cL=\cL(B)$ is the lattice generated by the columns of $B$ (see, e.g., {{\cite[Fact 2]{AGHS13}}}).
As a result, we can extend Theorem~\ref{thm:main} to the case in which $X$ is sampled from $(D_{\cL,M})^m$ where $\cL=\cL(B)$ for some basis $B$, and $M=S B^T$. The only change is that the statistical closeness is to the distribution $D_{\cL + X \bc, RX^T}$.

\paragraph{Acknowledgements:} We thank Damien Stehl\'e for useful comments.

 \bibliographystyle{alphaabbrvprelim}
\bibliography{fuzzy}

\end{document}